\newcommand{\unary}[2]{{\tt [\,\ensuremath{#1}\,/\,\ensuremath{#2}\,]}}
\newcommand{\effect}[2]{{\tt \ensuremath{#1}\,$\bullet$\,\ensuremath{#2}}}
\newcommand{\upd}[3]{{\tt \ensuremath{#1}\,$\prec$+\,(\ensuremath{#2}\,,\,\ensuremath{#3})}}
\newcommand{\sn}[1]{{\tt sn}\ensuremath{\,#1}}
\newcommand{\sng}{{\tt sn}}
\newcommand{\net}[1]{{\tt Ne}\ensuremath{\,#1}}
\newcommand{\Red}[2]{{\tt Red}\ensuremath{\,#1\,#2}}
\newcommand{\RedSubst}[2]{{\tt RedSubst}\ensuremath{\,#1\,#2}}
\newcommand{\Redg}{{\tt Red}}
\newcommand{\fresh}[2]{{\tt \ensuremath{#1}\,\#\,\ensuremath{#2}}}
\newcommand{\free}[2]{{\tt \ensuremath{#1}\,*\,\ensuremath{#2}}}
\newcommand{\restr}[2]{{\tt (\ensuremath{#1}\,,\,\ensuremath{#2})}}
\newcommand{\freshr}[3]{{\tt \ensuremath{#1}\,$\#\downharpoonright$\,(\ensuremath{#2}\,,\,\ensuremath{#3)}}}
\newcommand{\recg}{{\tt Rec}}
\newcommand{\rec}[3]{{\tt Rec}\ensuremath{\,#1\,#2\,#3}}
\newcommand{\zero}{{\tt O}}
\newcommand{\sucg}{{\tt S}}
\newcommand{\suc}[1]{{\tt S}\,\ensuremath{#1}}
\newcommand{\stepg}{{\tt →}\ensuremath{\beta}}
\newcommand{\step}[2]{\ensuremath{#1} {\tt →}\ensuremath{\beta} \ensuremath{#2}}
\newcommand{\stepc}[2]{\ensuremath{#1 \rightsquigarrow #2}}
\newcommand{\stepcg}{\ensuremath{\rightsquigarrow}}
\newcommand{\conv}[2]{\ensuremath{#1\ {\sim}\alpha\ #2}}
\newcommand{\convg}{\ensuremath{{\sim}\alpha}}
\newcommand{\contracts}[2]{\ensuremath{#1 \triangleright #2}}
\newcommand{\contraction}{\ensuremath{\triangleright}}
\newcommand{\betacontraction}{\ensuremath{\triangleright\beta}}
\newcommand{\anonymdef}[1]{\underline{#1}}
\theoremstyle{definition}
\newtheorem{theorem}{Theorem}[section]
\newtheorem{definition}[theorem]{Definition}
\newtheorem{lemma}[theorem]{Lemma}
\title{A Formal Proof of the Strong~Normalization~Theorem for System T in Agda\thanks{This work is partially supported by Agencia Nacional de Investigaci\'on e Innovaci\'on (ANII), Uruguay.}}
\author{Sebasti\'an Urciuoli
\institute{Universidad ORT Uruguay\\ Montevideo, Uruguay}
\email{urciuoli@ort.edu.uy}
}
\begin{document}

\maketitle

\begin{abstract}
We present a framework for the formal meta-theory of lambda calculi in first-order syntax, with two sorts of names, one to represent both free and bound variables, and the other for constants, and by using Stoughton's multiple substitutions.
On top of the framework we formalize Girard's proof of the Strong Normalization Theorem for both the simply-typed lambda calculus and System~T.
As to the latter, we also present a simplification of the original proof. The whole development has been machine-checked using the Agda system.
\end{abstract}

\section{Introduction}

In \cite{tasistro2015} a framework was presented for the formal meta-theory of the pure untyped lambda calculus in first-order abstract syntax (FOAS) and using only one sort of names for both free and bound variables\footnote{Both the previous framework and the one presented  here use named variables, it bears repeating. In a contrary sense, there are nameless approaches, e.g., de-Bruijn indices \cite{debruijn1972} or locally nameless syntax \cite{locallynameless}, which use numbers to identify the variables.}.
Based upon Stoughton's work on multiple substitutions \cite{stoughton88}, the authors were able to give a primitive recursive definition of the operation of substitution which does not identify alpha-convertible terms\footnote{Or without using Barendregt's variable convention.},
avoids variable capture, and has a homogeneous treatment in the case of abstractions.
Such a definition of substitution is obtained by renaming every bound name to a sufficiently fresh one. 
The whole development has been formalized in constructive type theory using the Agda system \cite{agda}.

The framework has been used 
since then to verify many fundamental meta-theoretic properties of the lambda calculus including:
Subject Reduction for the simply-typed lambda calculus (STLC) in \cite{copello2016};
the Church-Rosser Theorem for the untyped lambda calculus also in \cite{copello2016};
the Standardization Theorem in \cite{copes2018}, and;
the Strong Normalization Theorem for STLC in \cite{urciuoli2020}, and by using F.~Joachimski and R.~Matthes' syntactical method \cite{joachimski2003}.
Now in this paper, we continue the same line of work and formalize the Strong Normalization Theorem for System T, and we also present a new and different mechanization for STLC.

System~T extends STLC by adding primitive recursive functions on natural numbers.
It has its roots in K.~Gödel's work presented in \cite{godel1958}, and it was originally developed to study the consistency of Peano Arithmetic.
The Strong Normalization Theorem states 
that every program (term) in some calculus under consideration is strongly normalizing.
A term is \textit{strongly normalizing} if and only if its computation 
always halts regardless of the reduction path been taken. 
This result for System~T is already well known.
In this development we mechanize J.-Y.~Girard's proof presented in \cite{girard1989}, which in turn is based on W.~W.~Tait's method of {\em computability} or {\em reducible functions} \cite{tait} (henceforth we shall refer to Girard and Tait's method or proof interchangeably). This method defines a (logical) relation between terms and types that is fitter than the Strong Normalization Theorem, and hence it enables a more powerful induction hypothesis. 
Any term related to some type under such a relation is said to be \textit{reducible}.
Then the method consists of two steps: first, to prove that all reducible term are strongly normalizing, and secondly to prove that all typed terms are reducible. 

Initially, the sole objective of this work was to formalize a proof of the Strong Normalization Theorem but only for System~T, and by using the framework presented in \cite{tasistro2015}. 
Of course, the syntax of the pure lambda terms had to be extended to include the term-formers for the natural numbers and the recursion operator\footnote{In contrast to \cite{girard1989}, during this development we shall not consider booleans nor tuples as part of the syntax. Nevertheless, they can be easily defined by the machinery presented here.}. 
For this, we based ourselves upon a standard definition of the lambda terms in which two disjoint sort of names are used, one to represent the variables, and the other for the constants, e.g., see \cite{hindley}.
Now, instead of restricting ourselves to a specific set of constants, we shall allow any (countable) set. 
Once the syntax of the framework had been parameterised 
it felt natural to parameterise the reduction schema as well, as these relations are often defined by the syntax.
The work went a bit further, and the first part of the proof was also abstracted for a class of calculi to be defined; this step consists mainly in analysing reduction paths.
To round up, hitherto the work evolved from formalizing the proof of the Strong Normalization Theorem in System~T, into also providing a general-purpose framework with theories for substitution, alpha-conversion, reduction and reducible terms of simple types. 

Now, having such a framework it was a good time to revisit the previous formalization of the Strong Normalization Theorem for STLC presented in \cite{urciuoli2020}. 
There, the definition of the logical relation was based on the one in the POPLMark Challenge 2 \cite{poplmark2}, and it included the context of variables.
In addition to that, a syntactical characterization based on \cite{joachimski2003} was used to define the type of the strongly normalizing terms.
In this development, we shall use a standard definition of the logical relation which does not contain the context, and an accessibility characterization of the strongly normalizing terms based on \cite{alti:phd93}. 
Furthermore, the proof for STLC is contained in the one for System~T, so it serves both as a milestone in this exposition, as well as to show the incremental nature of the whole method presented here.

The last result presented in this development is about a simplification in Girard's proof of the Strong Normalization Theorem for System~T.
More specifically, in the second part of the proof there is a lemma whose principle of induction requires to count the occurrences of the successor operator in the \textit{normal form} of a given strongly normalizing term. 
This is not strictly necessary, and one can just count such symbols \textit{directly} in the term, and so \textit{avoid evaluating} it. 

In summary, the novel contributions in this paper are:
(1) a framework for the meta-theory of lambda calculi in FOAS with named variables and constants;
(2) a complete mechanization of Girard's proof of The Strong Normalization Theorem for System~T in Agda;
(3) a new and different mechanization of Girard's proof for STLC in Agda as well, and;
(4) a simplification of the principle of induction in Girard's original proof of The Strong Normalization Theorem for System T.
To the best of our knowledge, there is not yet a mechanization of the Strong Normalization Theorem for System~T.
The development has been entirely written in Agda and it is available at: \url{https://github.com/surciuoli/lambda-c}.

The structure of this paper is the following. In the next section we introduce the new framework: its syntax, substitution, conversion theories and logical relations (reducible terms). 
Some results presented are completely new, and some others are an extension of \cite{tasistro2015,urciuoli2020} to consider the additional syntax. 
From \cref{subsec:reducibility} on, and unless the opposite is explicitly stated, all results represent new developments. 
In \cref{sec:stlc}, we formalize both STLC and Girard's proof of the Strong Normalization Theorem. In \cref{sec:systemt}, we extend both the calculus and Girard's proof to System~T, and we also explain the aforementioned simplification. 
In the last sections we give some overall conclusions and compare our work with related developments. 

Throughout this exposition we shall use Agda code for definitions and lemmata, and a mix of code and English for the proofs in the hope of making reading more enjoyable. A certain degree of familiarity with Agda or at least with functional programming languages like Haskell is assumed. 

\section{The Framework}
\label{sec:framework}

Let $\mbox{{\tt V}} = v_0, v_1 \ldots$ be any infinitely countable collection of names, the \anonymdef{variables}, ranged over by letters $x$, $y\ldots$
and equipped with a deciding procedure for definitional equality; 
for concreteness, we shall define ${\tt V} = \mathbb{N}$, i.e., the set of natural numbers in Agda,
but it can be any other suitable type, e.g., strings.
Let |C| be any possibly infinite countable collection of names, the \anonymdef{constants}, and ranged over by $c$. The \anonymdef{abstract syntax} of the lambda terms with constants is defined:

\begin{definition}[Syntax]
\label{def:syntax}
\hfill 
\begin{code}[numbers=left]
module CFramework.CTerm (C : Set) where
...
data Λ : Set where
  k : C → Λ 
  v : V → Λ
  ƛ : V → Λ → Λ
  _·_ : Λ → Λ → Λ
\end{code}
\end{definition}
\noindent In line 1 we indicate that the definition is contained in the module |CFramework.CTerm|, which according to Agda's specification must be located in the file CFramework/CTerm.agda. 
We also specify that the module is parameterised by the set of constants |C|, which can be of any inductive type (|Set|). 
Lines 4 and 5 define the constructors for the constants and the variables respectively.
In line 6 we use {\tt ƛ} to not interfere with Agda's primitive $\lambda$.
We shall follow the next convention unless the opposite is explicitly stated: use $\lambda$ to represent \textit{object-level} abstractions in informal discussions and proofs, and use {\tt ƛ} in code listings. 
Line 7 defines the infix binary operator of function application.
As usual, we shall use letters $M$, $N\dots$ to range over terms.

The module can be then instantiated with any type of constants. 
For example, the next declaration derives the syntax of the \anonymdef{pure lambda terms} into the current scope:

\begin{definition}
\label{def:puresyntax}
|open import CFramework.CTerm ⊥|
\end{definition}

\noindent \texttt{⊥} is the inductive type without any constructor. 
The {\tt import} statement tells Agda to load the content of the file named after the module into the current scope, while the {\tt open} statement lets one access the definitions in it without having to qualify them.
Both statements can be combined into a single one as shown.

Whenever a name $x$ syntactically occurs in a term $M$ and is not bound by any abstraction, we shall say $x$ is \anonymdef{free} in $M$, and write it 
\free{x}{M}. On the other hand, if every occurrence of $x$ is bound by some abstraction (or even if $x$ does not occur at all), we shall say $x$ is \anonymdef{fresh} in $M$, and write it \fresh{x}{M} as in nominal techniques, e.g., see \cite{urban2004}. Both relations are inductively defined in a standard manner, and in \cite{tasistro2015} it was proven that both relations are opposite to each other.

It will come in handy to define both the type of predicates and binary relations on terms respectively by: |Pred = Λ → Set|, and |Rel = Λ → Λ → Set|.  

\subsection{Substitution}
\label{sec:substitution}
Substitution is the fundamental entity on which alpha- and beta-conversion sit. We shall base ourselves upon the work done in \cite{stoughton88}, and first define \anonymdef{multiple substitutions} as functions from variables to terms:
\begin{verbatim}
Subst = V → Λ
\end{verbatim}
\noindent We shall use letter $\sigma$ to range over them. 
Later, by applying these functions to the free variables in a given term we shall obtain the desired operation of the \textit{action of substitution} (\cref{def:actionSubst}), i.e., the operation of replacing every free name $x$ in $M$ by its corresponding image $\sigma x$.

Most substitutions appearing in properties and definitions are identity-almost-everywhere.
We can generate them by starting from
the \anonymdef{identity} substitution $\iota$, which maps every variable to itself, and applying the \anonymdef{update} operation on substitutions |_≺+_| such that for any $\sigma$, $x$ and $M$, \upd{\sigma}{x}{M} is the substitution that maps $x$ to $M$, and $y$ to $\sigma y$ for every $y$ other than $x$:
\begin{definition}[Update operation]
\label{def:update}
\hfill 
\begin{code}[numbers=left]
_≺+_ : Subst → V × Λ → Subst
(σ ≺+ (x , M)) y with x ≟ y
... | yes _ = M
... | no  _ = σ y
\end{code}
\end{definition}
\noindent In line 1, |×| is the non-dependent product type, and in line 2, |≟| is the procedure that decides if two names are equal, and mentioned at the start of this section.

In some places we shall need to restrict the domain of a substitution so to have a finite image or range, therefore we introduce the type of \anonymdef{restrictions}, written |R|, and defined: |R = Subst × Λ|.
Below we extend freshness to restrictions:
\begin{definition}[Freshness on restrictions]
\hfill 
\label{def:freshRestr}
\begin{verbatim}
_#⇂_ : V → R → Set
x #⇂ (σ , M) = (y : V) → y * M → x # σ y
\end{verbatim}
\end{definition}
\noindent In English, a name is fresh in the restriction \restr{\sigma}{M} if and only if it is fresh in every image $\sigma y$, for every \free{y}{M}.

Now we shall briefly discuss the mechanism in the framework used 
to rename the bound names in a given term, and so avoid capturing any free variable during the action of substitution. 
The complete description can be found in \cite{tasistro2015}.
Let |$\chi$'| be the function that returns the first name not in a given list:
\begin{verbatim}
χ' : List V → V    
\end{verbatim}
The algorithm is obtained by a direct consequence of the pigeonhole principle: the list of names given is finite, therefore we can always choose a fresh name from the infinite collection |V|. Then we can define the \anonymdef{choice} function |χ| that returns the first name not in a given restriction
\restr{\sigma}{M},
by first concatenating into a single list every free name that appears in the image $\sigma x$ for every \free{x}{M}, and then selecting the first name not in such a list by using the previous $\chi'$ function:
\begin{verbatim}
χ : R → V
χ (σ , M) = χ' (concat (mapL (fv ∘ σ) (fv M)))
\end{verbatim}
|mapL| applies a function to every element in a list, $\circ$ stands for the usual composition of functions, and |fv| computes the list of free names in a given term.
In \cite{tasistro2015} it was proven that $\chi$ computes a sufficiently fresh name, according to our expectations to be addressed shortly:
\begin{lemma}
\label{lemma:chi}
\begin{verbatim}
χ-lemma2 : (σ : Subst) (M : Λ) → χ (σ , M) #⇂ (σ , M)
\end{verbatim}
\end{lemma}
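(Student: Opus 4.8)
The plan is to unfold the definition of freshness on restrictions (\cref{def:freshRestr}) and thereby reduce the goal to a membership statement about lists. Abbreviate \chic\,\restr{\sigma}{M} by $x$. Unfolding, the goal becomes: for every variable $y$ with \free{y}{M}, the name $x$ is fresh in the image term, i.e.\ \fresh{x}{\sigma\,y}. Hence I fix an arbitrary $y$ together with a proof of \free{y}{M}, and set out to establish \fresh{x}{\sigma\,y}.

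The essential idea is to move between the nominal predicates and the list of free names \texttt{fv} of a term, using the standard correspondences \fresh{z}{N} iff $z \notin \mathtt{fv}\,N$ and \free{z}{M} iff $z \in \mathtt{fv}\,M$ (both available from \cite{tasistro2015}). Let $L = \mathtt{concat}\,(\mathtt{mapL}\,(\mathtt{fv}\circ\sigma)\,(\mathtt{fv}\,M))$ be the list passed to $\chi'$ in the definition of \chic, so that $x = \chi'\,L$. I would then argue in three steps. First, from \free{y}{M} obtain $y \in \mathtt{fv}\,M$. Second, establish the inclusion that every element of $\mathtt{fv}\,(\sigma\,y)$ occurs in $L$: since $y \in \mathtt{fv}\,M$, the list $\mathtt{fv}\,(\sigma\,y) = (\mathtt{fv}\circ\sigma)\,y$ is one of the lists produced by $\mathtt{mapL}$ over $\mathtt{fv}\,M$, and $\mathtt{concat}$ gathers all of their elements. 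Third, invoke the defining property of $\chi'$, namely that the name it returns lies outside its argument list, to conclude $x = \chi'\,L \notin L$, whence $x \notin \mathtt{fv}\,(\sigma\,y)$. Translating back through the freshness correspondence yields \fresh{x}{\sigma\,y}, which is what is required.

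The main obstacle is the list reasoning in the second step: chaining membership through $\mathtt{mapL}$ and $\mathtt{concat}$. Concretely, I expect to need an auxiliary lemma stating that if $y \in xs$ then every element of $f\,y$ appears in $\mathtt{concat}\,(\mathtt{mapL}\,f\,xs)$, proved by a routine induction on $xs$. The remaining ingredients --- the property of $\chi'$ that $\chi'\,L \notin L$, and the two correspondences between the freshness/freeness relations and membership in \texttt{fv} --- are reusable facts from the framework, so the genuine work lies in packaging this membership argument and in cleanly discharging the transitions between the relational and the list-based formulations.
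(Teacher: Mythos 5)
Your proposal is correct and follows exactly the intended argument: the paper itself defers the proof to \cite{tasistro2015}, but the surrounding text describing how $\chi$ is built from $\chi'$, \texttt{fv}, \texttt{mapL} and \texttt{concat} makes clear that the proof is precisely the membership-chasing you describe (freshness/freeness as non-membership/membership in \texttt{fv}, the inclusion of $\mathtt{fv}\,(\sigma\,y)$ in the concatenated list, and the defining property of $\chi'$). No gaps; the auxiliary \texttt{mapL}/\texttt{concat} membership lemma you anticipate is indeed the only real work.
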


The \anonymdef{action of a substitution} $\sigma$ on a term $M$ is the operation that replaces every free name in $M$ by its corresponding image under $\sigma$. It is written \effect{M}{\sigma} and defined:
\begin{definition}[Action of substitution]
\label{def:actionSubst}
\hfill 
\begin{verbatim}
_∙_ : Λ → Subst → Λ
k c ∙ σ = k c
v x ∙ σ = σ x
M · N ∙ σ = (M ∙ σ) · (N ∙ σ)
ƛ x M ∙ σ = ƛ y (M ∙ σ ≺+ (x , v y)) where y = χ (σ , ƛ x M)
\end{verbatim}
\end{definition}
\noindent Notice that in the last equation we always rename the bound variable $x$ to $y$ by using the $\chi$ function. We can show that this method avoids variable capture:
for any \free{w}{M} other than $x$ it must follow \fresh{y}{(\mbox{\upd{\sigma}{x}{y}})w}, otherwise it would mean that we have captured an {\em undesired} free occurrence of $y$. 
Notice that if $w=x$ then its image is $y$ which represents an occurrence of $x$ in the original term $\lambda x M$ and therefore must be ``re-bound''.
So, \free{x}{M} and $x\not=w$, therefore \free{w}{\lambda x M}. Next,
by \cref{lemma:chi} we have \freshr{y}{\sigma}{\lambda x M}. Then, by \cref{def:freshRestr} it follows \fresh{y}{\sigma w}, and since $(\mbox{\upd{\sigma}{x}{y}})w=\sigma w$ by \cref{def:update}, so \fresh{y}{(\mbox{\upd{\sigma}{x}{y}})w}.

Unary substitution is defined: 
\begin{verbatim}
_[_/_] : Λ → Λ → V → Λ
M [ N / x ] = M ∙ ι ≺+ (x , N)
\end{verbatim}

Our definition of $\bullet$ has a direct consequence on the terms: when submitted to substitutions, the bound variables become ``ordered'', 
for the lack of a better name\footnote{There is a similitude between this designation and A. Church's definition of {\em principal normal form} terms in \cite[p.~348]{church36}.}. 
Consider the next example. 
Let $M = \lambda v_1 v_1$. 
By definition, $\mbox{\effect{M}{\sigma}} = \lambda x (v_1 \bullet \mbox{\upd{\sigma}{v_1}{x}}) = \lambda x x$, 
where $x = \chi\mbox{\restr{\sigma}{\lambda v_1 v_1}}$, and for every $\sigma$.
We can see that $M$ does not contain any free variable, therefore by definition of $\chi$ we have that $x = v_0$, i.e., the first name in |V|, and so we have that the closed term $\lambda v_1 v_1$ turned into $\lambda v_0 v_0$ even though no substitution actually happened. 
Another example a bit more sophisticated is the next one: 
$(\lambda v_3 \lambda v_2 \lambda v_0 (v_0 v_1 v_2 v_3)) \mbox{\unary{v_0}{v_1}} = \lambda v_1 \lambda v_2 \lambda v_3 (v_3 v_0 v_2 v_1)$.
This collateral effect will have some implications on our definition of beta-reduction. 

\subsection{Alpha-conversion}
\label{subsec:alpha}
Alpha-conversion is inductively defined by the syntax:
\begin{code}[numbers=left]
module CFramework.CAlpha (C : Set) where
open import CFramework.CTerm C
...
data _∼α_ : Rel where
  ∼k : {c : C} → k c ∼α k c
  ∼v : {x : V} → v x ∼α v x
  ∼· : {M M' N N' : Λ} → M ∼α M' → N ∼α N' → M · N ∼α M' · N'
  ∼ƛ : {M M' : Λ} {x x' y : V} → y # ƛ x M → y # ƛ x' M' 
     → M [ v y / x ] ∼α M' [ v y / x' ] → ƛ x M ∼α ƛ x' M'
\end{code}
\noindent Since the syntax of the lambda terms is parameterised by a set |C|, every module that depends on the syntax (all of them) will have to be parameterised by |C| as well. Lines 1 and 2 illustrate this point.

Arguments written between braces |{| and |}| are called \textit{implicit} and they are not required to be supplied; 
the type-checker will infer their values, whenever possible.
Implicit arguments can be made explicit by enclosing them between braces, e.g., |∼k {c$_1$}| has type |k c$_1$ ∼α k c$_1$|. 

The only case in the definition worth mentioning is |∼ƛ|. There, we rename both $x$ and $x'$ to a common fresh name $y$. If such results are alpha-convertible, then the choice of the bound name is irrelevant, and it should be expected to assert that both abstractions are alpha-convertible. This definition can also be seen in nominal techniques, e.g., see \cite{urban2004}, though there it happens to be more usual to rename only one side of \convg. Our symmetrical definition has some advantages over those that are not (see \cite{tasistro2015}).
Also, in \cite{tasistro2015}, \convg\ was proven to be an equivalence relation.

The next results are quickly extended from \cite{tasistro2015}: 

\begin{lemma}
\label{lemma:substIota}
\begin{verbatim}
lemma∙ι : ∀ {M} → M ∼α M ∙ ι
\end{verbatim}
\end{lemma}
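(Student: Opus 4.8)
The plan is to prove \cref{lemma:substIota} by structural induction on $M$, following the argument for pure terms in \cite{tasistro2015} and adding the single new (and trivial) case of constants. For $M = k\,c$ and $M = v\,x$ the action of $\iota$ leaves the term unchanged: by \cref{def:actionSubst}, \effect{(k\,c)}{\iota} is $k\,c$, and \effect{(v\,x)}{\iota} is $\iota\,x = v\,x$. In both cases the two sides are syntactically identical, so the goal follows from reflexivity of \convg{} (its rules for constants and variables). For an application $M = P \cdot Q$, \cref{def:actionSubst} shows that \effect{(P \cdot Q)}{\iota} is the application of \effect{P}{\iota} to \effect{Q}{\iota}, so the two induction hypotheses, stating that $P$ and $Q$ are \convg-related to \effect{P}{\iota} and \effect{Q}{\iota} respectively, combine with the congruence rule for application to close the case.

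The real work is the abstraction case $M = \lambda x\,N$. By the last clause of \cref{def:actionSubst}, \effect{(\lambda x\,N)}{\iota} equals $\lambda y\,N'$, where $y = \chi\,\mbox{\restr{\iota}{\lambda x\,N}}$ and $N'$ is the body \effect{N}{(\mbox{\upd{\iota}{x}{v\,y}})}, which by the definition of unary substitution is exactly $N\,$\unary{v\,y}{x}. So the goal reduces to \conv{\lambda x\, N}{\lambda y\, N'}. To apply the abstraction rule of \convg{} I would first pick a name $z$ fresh for both abstractions, and then use that rule to reduce the goal to showing that the two bodies become alpha-convertible once their bound variables are renamed to $z$, namely $N\,$\unary{v\,z}{x}\ \convg\ $N'\,$\unary{v\,z}{y}.

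This last equivalence is discharged using the substitution lemmas of \cite{tasistro2015}. By \cref{lemma:chi} applied to $\iota$, and because $\iota$ maps every free name of $\lambda x\,N$ to itself, the chosen $y$ satisfies \fresh{y}{\lambda x\,N}; in particular $y$ is not free in $N$ unless $y = x$. A substitution-composition lemma rewrites $N'\,$\unary{v\,z}{y} --- the term obtained by applying \upd{\iota}{x}{v\,y} to $N$ and then \upd{\iota}{y}{v\,z} --- as \effect{N}{\tau} for the corresponding composite substitution $\tau$; a short case analysis on a free variable $w$ of $N$ (the case $w = x$, and the case $w \neq x$, where freshness of $y$ forces $w \neq y$) shows that $\tau$ agrees, on every free variable of $N$, with \upd{\iota}{x}{v\,z}. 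The lemma stating that two substitutions coinciding on the free variables of a term produce (alpha-)equal results then identifies \effect{N}{\tau} with $N\,$\unary{v\,z}{x}; the two bodies thus coincide up to \convg, which closes the case.

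I expect the abstraction case to be the main obstacle. Within it, the delicate points are (i) discharging the freshness side-conditions demanded both by the abstraction rule and by the choice function $\chi$, and (ii) having the substitution-composition and free-variable-agreement lemmas available in precisely the form needed; once these are in place the remainder is routine rewriting. It is also worth noting that the induction hypothesis appears to be needed only in the application case, since the abstraction case is settled entirely by the freshness properties of $\chi$ together with the two substitution lemmas.
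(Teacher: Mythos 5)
Your proof is correct. The paper itself gives no argument for this lemma --- it is simply inherited from \cite{tasistro2015} --- but your structural induction is exactly the expected one: the constant, variable and application cases are immediate, and the abstraction case reduces, via the \verb|∼ƛ| rule with a common fresh name $z$, to comparing $N\,$\unary{v\,z}{x} with $(N \bullet (\mbox{\upd{\iota}{x}{v\,y}}))\,$\unary{v\,z}{y}, whose equivalence you establish soundly. The only remark worth making is that this last step need not be re-derived from scratch through an explicit composite substitution and a free-variable-agreement lemma: it is precisely the instance $\sigma = \iota$, $N = v\,z$ of \cref{lemma:composition}, whose freshness premise is discharged by \cref{lemma:chi}; your observation that the induction hypothesis is only needed in the application case is likewise accurate.
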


\begin{lemma}
\label{lemma:composition}
\begin{verbatim}
corollary1SubstLemma : ∀ {x y σ M N} → y #⇂ (σ , ƛ x M) 
→ (M ∙ σ ≺+ (x , v y)) ∙ ι ≺+ (y , N) ∼α M ∙ σ ≺+ (x , N)
\end{verbatim}
\end{lemma}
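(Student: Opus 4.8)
The plan is to avoid a direct induction on $M$ — whose abstraction case would rename the bound variable through $\chi$ and create nested updates that are awkward to track — and instead to reduce the statement to the general substitution composition lemma (the \texttt{SubstLemma} of which this result is a corollary, cf. \cite{tasistro2015}) together with a pointwise comparison of substitutions. Throughout, write $\sigma' = \sigma \updg (x, v\,y)$ and $\tau = \iota \updg (y, N)$, so that the left-hand side is $(M \bullet \sigma') \bullet \tau$.

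First I would invoke the composition lemma to collapse the two successive actions into a single one, obtaining $(M \bullet \sigma') \bullet \tau \convg M \bullet \rho$, where $\rho$ is the substitution sending each $z$ to $(\sigma'\,z) \bullet \tau$. It then suffices, by the congruence lemma stating that substituting two substitutions that are pointwise alpha-convertible on the free variables of a term yields alpha-convertible results, to prove $(\sigma'\,z) \bullet \tau \convg (\sigma \updg (x, N))\,z$ for every \free{z}{M}.

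I would finish by case analysis on whether $z = x$. If $z = x$, both sides reduce to $N$: on the left $\sigma'\,x = v\,y$ and $(v\,y) \bullet \tau = \tau\,y = N$ (by \cref{def:update} and \cref{def:actionSubst}), while on the right $(\sigma \updg (x,N))\,x = N$. If $z \ne x$, the right-hand side is $\sigma z$ and the left-hand side is $(\sigma z) \bullet \tau$. Here the hypothesis enters: since \free{z}{M} and $z \ne x$, we have \free{z}{\lambda x M}, so from \freshr{y}{\sigma}{\lambda x M} and \cref{def:freshRestr} we obtain \fresh{y}{\sigma z}. Because $y$ does not occur free in $\sigma z$, the update at $y$ is invisible, giving $(\sigma z) \bullet \tau \convg (\sigma z) \bullet \iota$, which by \cref{lemma:substIota} (and symmetry of \convg) is alpha-convertible to $\sigma z$.

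The main obstacle is precisely this last $z \ne x$ step: justifying that an update at the fresh name $y$ may be dropped, i.e.\ that \fresh{y}{P} implies $P \bullet \tau \convg P \bullet \iota$. This calls for a dedicated lemma saying that substitution ignores updates made at variables that are fresh in the term being substituted (a restricted pointwise-equality result), and it is essential that the pointwise comparison be demanded only over the free variables of $M$, since the two composed substitutions need not agree elsewhere.
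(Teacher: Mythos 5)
Your proposal is correct and follows essentially the route the paper relies on: the lemma is stated without proof as an extension of the cited framework, and its very name (\texttt{corollary1SubstLemma}) marks it as a corollary of the general substitution composition lemma, obtained exactly as you describe by composing the two actions and then comparing the resulting substitution with $\sigma \updg (x,N)$ pointwise on the free variables of $M$, splitting on $z = x$ and using the freshness hypothesis \freshr{y}{\sigma}{\lambda x M} to discard the update at $y$ in the other branch. The auxiliary facts you flag (pointwise agreement on free variables suffices, and updates at fresh names are invisible) are indeed the supporting lemmas of that framework, so no gap remains.
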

\noindent Arguments preceded by $\forall$ are not required to be annotated with their respective types. 

\subsection{Reduction}
\label{subsec:reduction}

Let \contraction\ be any binary relation on terms and called a \anonymdef{contraction} relation. The \anonymdef{syntactic closure} of \contraction\ is written \stepcg\ and it is inductively defined:

\begin{definition}
\label{def:reduction}
\hfill 
\begin{code}[numbers=left]
import CFramework.CTerm as CTerm
module CFramework.CReduction (C : Set) (_▹_ : CTerm.Rel C) where
open CTerm C
...
data _⟿_ : Rel where
  abs : ∀ {x M N} → M ⟿ N → ƛ x M ⟿ ƛ x N
  appL : ∀ {M N P} → M ⟿ N → M · P ⟿ N · P
  appR : ∀ {M N P} → M ⟿ N → P · M ⟿ P · N
  redex : ∀ {M N} → M ▹ N → M ⟿ N
\end{code}
\end{definition}
\noindent Line 1 imports the module |CFramework.CTerm|, and at the same time renames it to |CTerm| just for convenience. 
Line 2 specifies that the module is parameterised by 
the contraction relation |▹|; 
notice that since we have not opened the module |CTerm| nor specified the set of constants to be used, we wrote |CTerm.Rel C| (compare with line 5).
From now until the end of this section, it is assumed that both |C| and \contraction\ are in the scope of every definition, unless explicitly stated the opposite.

Any term on the left-hand side of \contraction\ shall be called a \anonymdef{redex}, as usual, and any term on right-hand side a \anonymdef{contractum}. Besides, any term on the right-hand side of \stepcg\ shall be called a \anonymdef{reductum}.

We can define beta-reduction by means of \stepcg\ as next. Let \anonymdef{beta-contraction} be inductively defined:
\begin{definition}[Beta-contraction]
\label{def:betacontraction}
\hfill 
\begin{code}
module CFramework.CBetaContraction (C : Set) where
...
data _▹β_ : Rel where 
  beta : ∀ {x M N} → ƛ x M · N ▹β M [ N / x ]
\end{code}
\end{definition}
\noindent Then \anonymdef{beta-reduction} for the pure lambda calculus is derived by importing the modules:
\begin{definition}[Beta-reduction]
\label{def:betareduction}
\hfill 
\begin{code}
open import CFramework.CBetaContraction ⊥
open import CFramework.CReduction ⊥ _▹β_ renaming (_⟿_ to _→β_)
\end{code}
\end{definition}
\noindent Recall that in \cref{def:puresyntax} we had explained that by defining ${\tt C}=\bot$ we obtain the syntax of the pure lambda terms.

The renaming done by $\bullet$ is sensitive to the free variables in the subject term. 
As a consequence, \stepg\ is not compatible with substitution, i.e., the next lemma {\em does not} hold:
\begin{verbatim}
∀ {M N σ} → M →β N → M ∙ σ →β N ∙ σ
\end{verbatim}
Consider the following example. 
Let $M = \lambda v_1 ((\lambda v_0 \lambda v_0 v_0) v_0)$ and $N = \lambda v_1 \lambda v_0 v_0$.
It can be seen that \mbox{\step{M}{N}} is derivable. Now, let us apply $\iota$ on each side. As to $N$, 
$v_1$ is renamed to the first name fresh in the restriction \restr{\iota}{\lambda v_1 \lambda v_0 v_0}, 
i.e., to $v_0$; 
we obtain $N\bullet\iota=\lambda v_0 \lambda v_0 v_0$. 
As to $M$, the variable $v_1$ is renamed to itself, since it is the first fresh name in the corresponding restriction (renaming it to $v_0$ would cause a capture).
So, $\mbox{\effect{M}{\iota}} = M$, and the only reductum $\lambda v_1((\lambda v_0 v_0)\mbox{\unary{v_0}{v_0}})$ of $M$ equals to $\lambda v_1 \lambda v_0 v_0$, 
which is not $N\bullet\iota$.

Since we are going to need some form of the lemma of compatibility above as we shall see, we will use the next approximation which is always possible: continuing with the earlier example, after the reduction takes place we shall perform an alpha-conversion step from the reductum to meet $N\bullet\iota$, i.e., \step{M\bullet\iota}{\lambda v_1 \lambda v_0 v_0} followed by \conv{\lambda v_1 \lambda v_0 v_0}{N\bullet\iota}.

So, let $r$ be any binary relation on terms, either a contraction relation or a reduction. We shall say $r$ is \anonymdef{alpha-compatible with substitution}, and write it \texttt{Compat∙\,$r$}, if and only if, for every directed pair of terms $M$ and $N$ related under $r$, there must exist some $P$ such that  $M\bullet\sigma$ and $P$ are also related, and that \conv{P}{N\bullet\sigma}. Formally:
\begin{definition}[Alpha-compatibility with substitution]
\label{def:alphacompatsubst}
\hfill 
\begin{verbatim}
Compat∙ r = ∀ {M N σ} → r M N → Σ[ P ∈ Λ ](r (M ∙ σ) P × P ∼α N ∙ σ)
\end{verbatim}
\end{definition}
\noindent In Agda, the dependent product type can be written |Σ[ a ∈ A ] B|, where |a| is some (meta-)variable of type |A|, and |B| is some type which might depend upon |a|.

Similarly, we shall say $r$ is \anonymdef{alpha-commutative} and define it:

\begin{definition}[Alpha-commutativity]
\label{def:commutativity}
\hfill 
\begin{verbatim}
Comm∼α r = ∀ {M N P} → M ∼α N → r N P → Σ[ Q ∈ Λ ](r M Q × Q ∼α P)
\end{verbatim}
\end{definition}

We shall restrict this development to contraction relations that \anonymdef{preserve freshness}, i.e., relations that do not introduce any free name in any contractum:

\begin{definition}
|Preserves# r = ∀ {x M N} → r M N → x # M → x # N|
\end{definition}

Then we have that, if \contraction\ preserves freshness, or it is compatible with substitution, or it commutes with alpha-conversion, then its syntactic closure has the corresponding properties as well: 

\begin{lemma}
\hfill 
\label{lemma:compatSubst}
\label{lemma:commutativity}
\begin{code}
preser⟿# \,: Preserves# _▹_ → Preserves# (_⟿_ _▹_)
compat⟿∙ \,: Preserves# _▹_ → Compat∙ _▹_ → Compat∙ (_⟿_ _▹_)
commut⟿α : Preserves# _▹_ → Compat∙ _▹_ → Comm∼α _▹_ → Comm∼α (_⟿_ _▹_)
\end{code}
\end{lemma}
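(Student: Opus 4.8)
The lemma bundles three properties of the syntactic closure $\stepcg$ of an arbitrary contraction $\contraction$: that it preserves freshness, that it is alpha-compatible with substitution (\cref{def:alphacompatsubst}), and that it is alpha-commutative (\cref{def:commutativity}). The plan is to establish them in this order, since the second uses the first and the third uses the second. Each is proved by induction on the derivation of $\stepcg$ (\cref{def:reduction}), whose four constructors are the congruence rules |abs|, |appL|, |appR| and the base case |redex|. In every one of the three properties the |redex| case is immediate: it simply forwards the goal to the matching hypothesis on $\contraction$. The two application cases are equally routine — apply the induction hypothesis to the subterm that reduces, keep the other subterm fixed, and reconcile with reflexivity of $\convg$. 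Thus all the difficulty sits in the |abs| case, for the single reason that the action of substitution renames the bound variable through the choice function $\chi$ (\cref{def:actionSubst}), and $\chi$ is sensitive to the free variables of its argument. For the first property, however, the |abs| case is still elementary: if $x$ is fresh in $\lambda z\,M$ then either $x=z$, so $x$ is fresh in $\lambda z\,N$ because $z$ is bound, or $x\neq z$ and $x$ is fresh in $M$, whence $x$ is fresh in $N$ by the induction hypothesis. The corollary I shall use repeatedly afterwards is that reduction can only shrink the free-variable set: $M \stepcg N$ implies $\mathrm{fv}(N)\subseteq\mathrm{fv}(M)$.

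For the second property, compatibility, the |abs| case is the first real obstacle. Given $\lambda z\,M \stepcg \lambda z\,N$ from $M \stepcg N$ and a substitution $\sigma$, the redex evaluates to $\lambda y\,(M \bullet (\sigma \updg (z,v_y)))$ with $y=\chi(\sigma,\lambda z\,M)$, whereas the intended target $(\lambda z\,N)\bullet\sigma$ uses $y'=\chi(\sigma,\lambda z\,N)$; these fresh names may genuinely differ — this is the collateral effect noted after \cref{def:actionSubst}. The first property rescues the argument: as $\mathrm{fv}(N)\subseteq\mathrm{fv}(M)$, the name $y$ is, by \cref{lemma:chi}, sufficiently fresh for $(\sigma,\lambda z\,N)$ as well. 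I would apply the induction hypothesis to $M \stepcg N$ under the updated substitution $\sigma\updg(z,v_y)$, obtaining $R$ with $M\bullet(\sigma\updg(z,v_y)) \stepcg R$ and $R \convg N\bullet(\sigma\updg(z,v_y))$, close with |abs|, and then reconcile the two admissible choices of bound name: both $\lambda y\,(N\bullet(\sigma\updg(z,v_y)))$ and $\lambda y'\,(N\bullet(\sigma\updg(z,v_{y'})))$ collapse, after renaming their bodies to a common fresh name via \cref{lemma:composition}, to the same term up to $\convg$, and $\convg$ is a congruence. One feature of this construction matters for later use: it is \emph{height-preserving}, since each constructor of $\stepcg$ is mapped to the very same constructor.

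The third property, commutativity, is where the plan leaves plain structural induction behind. Given $M \convg N$ and $N \stepcg P$, inverting $\convg$ in the |abs| case relates the bodies only \emph{after} renaming both binders to a common fresh name: from $\lambda z'\,M_1 \convg \lambda z\,N_1$ one gets a fresh $w$ with $M_1[v_w/z'] \convg N_1[v_w/z]$, and \emph{not} $M_1 \convg N_1$. The observation that breaks the deadlock is that alpha-equivalent terms have the same free variables, so (when $z'\neq z$; the case $z'=z$ is direct) the binder $z'$ of $M$ is automatically fresh in $N_1$, and then \cref{lemma:composition,lemma:substIota} let me derive $M_1 \convg N_1[v_{z'}/z]$ outright. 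Next I would rename the given reduction $N_1 \stepcg P_1$ to the binder $z'$ by invoking the second property with $\sigma=\iota\updg(z,v_{z'})$, obtaining $N_1[v_{z'}/z] \stepcg P_1'$ with $P_1' \convg P_1[v_{z'}/z]$ and — decisively — of the \emph{same height} as $N_1 \stepcg P_1$. This is exactly why the induction must be a well-founded induction on the height of the $\stepcg$-derivation rather than a structural one: the reduction handed to the induction hypothesis is not a subderivation of $N \stepcg P$, yet it does have strictly smaller height. Applying the induction hypothesis to $M_1 \convg N_1[v_{z'}/z]$ and $N_1[v_{z'}/z] \stepcg P_1'$ yields $Q_1$ with $M_1 \stepcg Q_1$ and $Q_1 \convg P_1[v_{z'}/z]$; re-abstracting with |abs| gives $\lambda z'\,M_1 \stepcg \lambda z'\,Q_1$, and a final use of \cref{lemma:composition} (with $z'$ fresh in $P_1$, obtained from freshness preservation and $z'$ fresh in $N_1$) shows $\lambda z'\,Q_1 \convg \lambda z\,P_1 = P$.

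The heart of the matter is therefore the |abs| case of the third property, and precisely the need to feed a \emph{renamed} reduction back into the induction hypothesis. Structural induction cannot justify this; it is salvaged only by (i) proving the second property in a height-preserving form, (ii) switching to well-founded induction on derivation height, and (iii) using the equality of free-variable sets of alpha-equivalent terms to move the binder without capture. Every remaining step is bookkeeping with the substitution-composition lemmas \cref{lemma:composition,lemma:substIota} and the congruence of $\convg$.
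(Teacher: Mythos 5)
Your proposal is correct, and it supplies exactly the details the paper omits: the paper's entire proof of this lemma is the single remark that the results are ``extended from'' the prior formalizations, plus the observation of the cascade of hypotheses (each property feeding the next). Your reconstruction matches that cascade precisely --- \texttt{Preserves\#} gives $\mathrm{fv}(N)\subseteq\mathrm{fv}(M)$, which is what lets the $\chi$-chosen name for $\lambda z\,M$ serve for $\lambda z\,N$ in the \texttt{abs} case of \texttt{Compat$\bullet$}, and \texttt{Compat$\bullet$} is what lets you rename the given reduction to the other binder in the \texttt{abs} case of \texttt{Comm}$\convg$ --- and you correctly isolate the only genuinely hard step, namely that the renamed reduction fed to the induction hypothesis in the commutativity proof is not a subderivation. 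Your fix (proving compatibility height-preserving and doing well-founded induction on derivation height) is sound; since the paper gives no details, I can only note that the cited mechanizations could equally well use the size of the source term (which variable-for-variable substitution preserves) or an alpha-structural induction principle as the termination measure, but this is a difference of bookkeeping, not of substance. The only loose ends are cosmetic: in the \texttt{appL}/\texttt{appR} cases of commutativity the untouched component is reconciled by inversion of $\convg$ rather than reflexivity, and the freshness you need at the end is $z'\,\#\,\lambda z\,P_1$ (trivial when $z'=z$, and via \texttt{Preserves\#} otherwise), both of which your argument implicitly covers.
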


\noindent Their proofs are extended from \cite{tasistro2015,urciuoli2020}.
Notice the cascade effect on the lemmata: each of them has all the arguments of the one above. This happens naturally since each lemma relies on the previous one.

Finally, we have that beta-contraction is alpha-commutative, along with two other results (their proofs are extended from \cite{urciuoli2020}):

\begin{lemma}
\label{lemma:betacompatsubst}
|Preserves# _▹β_ × Compat∙ _▹β_ × Comm∼α _▹β_|
\end{lemma}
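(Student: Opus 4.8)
The plan is to prove the three conjuncts of \cref{lemma:betacompatsubst} in turn, in every case exploiting that \cref{def:betacontraction} generates \betacontraction\ by the single rule {\tt beta}: any derivation $M \betacontraction N$ forces $M$ to be a redex $(\lambda x\,M_0)\cdot N_0$ and $N$ to be its contractum $M_0$\unary{N_0}{x}. Each property thus reduces to a direct computation on this one shape.

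For {\tt Preserves\#} \betacontraction, suppose \fresh{z}{M} with $M = (\lambda x\,M_0)\cdot N_0$; unfolding the application gives \fresh{z}{\lambda x\,M_0} and \fresh{z}{N_0}. The goal is \fresh{z}{\mbox{\effect{M_0}{\sigma_0}}} with $\sigma_0 = \mbox{\upd{\iota}{x}{N_0}}$. I would invoke the framework lemma that freshness in a restriction propagates through the action of substitution, reducing the goal to $z$ being fresh in \restr{\sigma_0}{M_0}; by \cref{def:freshRestr} this asks for \fresh{z}{\sigma_0 y} whenever \free{y}{M_0}. A case split on whether $y = x$ (\cref{def:update}) settles it: if $y = x$ the image is $N_0$ and \fresh{z}{N_0} is a hypothesis; if $y \neq x$ the image is $v\,y$, and \fresh{z}{v\,y} holds because \free{y}{\lambda x\,M_0} while \fresh{z}{\lambda x\,M_0}.

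For \CompatSubst{\betacontraction} (\cref{def:alphacompatsubst}), fix $\sigma$ and evaluate the left-hand side by \cref{def:actionSubst}: \effect{M}{\sigma} is the redex obtained by applying $\lambda y\,(\mbox{\effect{M_0}{\sigma_1}})$ to $N_1 := \mbox{\effect{N_0}{\sigma}}$, where $y = \chi\mbox{\restr{\sigma}{\lambda x\,M_0}}$ and $\sigma_1 = \mbox{\upd{\sigma}{x}{v\,y}}$. Contracting this redex with {\tt beta} produces a term $P$ (the substitution of $N_1$ for $y$ in \effect{M_0}{\sigma_1}), which I offer as the existential witness; it remains to show \conv{P}{\mbox{\effect{N}{\sigma}}}. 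Now \cref{lemma:chi} discharges the premise \freshr{y}{\sigma}{\lambda x\,M_0} of \cref{lemma:composition}, and that lemma --- instantiated with its argument taken to be $N_1$ --- states exactly \conv{P}{\mbox{\effect{M_0}{\mbox{\upd{\sigma}{x}{N_1}}}}}. To identify the right-hand sides I then use the framework's substitution-composition lemma, which rewrites \effect{N}{\sigma}, i.e.\ the action of $\sigma$ on \effect{M_0}{\sigma_0}, as \effect{M_0}{\tau} for the composite $\tau\,w = \mbox{\effect{(\sigma_0 w)}{\sigma}}$; by \cref{def:update} this $\tau$ sends $x$ to $N_1$ and every other $w$ to $\sigma w$, hence coincides pointwise with \upd{\sigma}{x}{N_1}. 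Congruence of $\bullet$ under pointwise-equal substitutions, plus transitivity of \convg\ (an equivalence by \cite{tasistro2015}), then yields \conv{P}{\mbox{\effect{N}{\sigma}}}.

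For \CommAlpha{\betacontraction} (\cref{def:commutativity}), assume \conv{M}{N} and $N \betacontraction P$. Inverting the {\tt beta} step gives $N = (\lambda x\,M')\cdot N'$ and $P = M'$\unary{N'}{x}. Inverting \conv{M}{N} through the application clause $\sim\!\cdot$ forces $M = M_1\cdot M_2$ with \conv{M_1}{\lambda x\,M'} and \conv{M_2}{N'}; a second inversion of \conv{M_1}{\lambda x\,M'} through the abstraction clause $\sim\!\lambda$ forces $M_1 = \lambda x_1\,M_1'$. Hence $M$ is itself a redex and {\tt beta} gives $M \betacontraction Q$ with $Q := M_1'$\unary{M_2}{x_1}, so the remaining goal is \conv{Q}{P}. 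I expect this to be the crux of the whole lemma. The difficulty is not reconstructing the redex but reconciling how bound variables are recorded: the clause $\sim\!\lambda$ only certifies that the bodies agree after renaming \emph{both} binders to a common fresh $z$, namely \conv{M_1'[v\,z/x_1]}{M'[v\,z/x]}, whereas \conv{Q}{P} demands agreement after substituting the genuine, merely alpha-related arguments $M_2$ and $N'$ at the \emph{original} binders $x_1$ and $x$. Bridging this is the content of the framework's congruence of the action of substitution with respect to \convg\ (in both the subject term and the replaced argument) from \cite{tasistro2015}, and it is here that the symmetric design of \convg\ (renaming both sides at once) is what makes the premises line up. By comparison the first conjunct is a bare freshness computation and the second collapses almost immediately onto \cref{lemma:composition}, which was tailored to the shape of the contractum produced by \cref{def:actionSubst}.
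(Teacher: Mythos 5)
The paper itself gives no proof of this lemma---it only remarks that ``their proofs are extended from \cite{urciuoli2020}''---so there is no in-paper argument to compare against line by line; your reconstruction is the standard one that the framework is built to support, and it is correct: inversion of the single {\tt beta} rule, the freshness-through-restrictions lemma for {\tt Preserves\#}, and \cref{lemma:chi} together with \cref{lemma:composition} and the substitution-composition/pointwise-congruence lemmas for \CompatSubst{\betacontraction}. For \CommAlpha{\betacontraction} you correctly identify the crux (reconciling the common fresh name of the abstraction clause of \convg\ with the original binders), though note that the bridge is not a single ``congruence of $\bullet$ with respect to \convg'' step: it reuses \cref{lemma:composition} on each side (with the fresh name's premises supplied by the abstraction clause) plus the fact that pointwise \convg-related substitutions yield \convg-related images, which is exactly where the symmetric renaming of both binders pays off, as you observe.
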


\subsection{Strongly normalizing terms}

A term is \anonymdef{strongly normalizing} if and only if every reduction path starting from it eventually halts.
We shall use their accessible characterization (originally presented in \cite{alti:phd93}). For any given computation relation \stepcg\ we define \sng:

\begin{definition}[Strongly normalizing terms]
\hfill 
\label{def:sn}
\begin{code}[numbers=left]
sn : Λ → Set 
sn = Acc (dual _⟿_) 
\end{code}
\end{definition}

\noindent |Acc| is the type of the accessible elements by some order $<$, i.e., the set of elements $a$ such that there is no infinite sequence $\ldots < a' < a$. It is defined in Agda's standard library \cite{agdastdlib}. 
|dual| is the function that returns the \textit{type} of the inverse of every binary relation on terms. We use the dual of $\rightsquigarrow$ instead of the direct because |Acc|
expects an order that descends to its left-hand side, so to speak, which is not the case for \stepcg.
Line 2 can be read as: \sng\ is the set of terms $M$ such that $M \stepcg M' \stepcg  \dots$ is always finite.
Below is the definition of |Acc| to support this paragraph:
\begin{code}
data Acc {a b} {A : Set a} (_<_ : Rel A b) (x : A) : Set (a ⊔ b) where
  acc : (∀ y → y < x → Acc _<_ y) → Acc _<_ x
\end{code}
\noindent Note that |Rel| above is the type of binary relations between any two types, and it is defined in the standard library. |a| and |b| are universe indices or levels, and |⊔| is the function that returns the greatest of them.

The next result is adapted from \cite{urciuoli2020} and follows easily by induction:
\begin{lemma}
\label{lemma:inversionSnApp}
|inversionSnApp : ∀ {M N} → sn (M · N) → sn M × sn N|
\end{lemma}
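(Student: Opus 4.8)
The plan is to proceed by well-founded recursion on the accessibility derivation that witnesses \sn{(M \cdot N)}, leaning on the fact that reduction is a congruence. By \cref{def:sn}, a proof of \sn{X} is a term $\mathtt{acc}\,h$ whose field $h$ sends every one-step reductum of $X$ — that is, every $Y$ with \stepc{X}{Y} — to a proof of \sn{Y}; intuitively, \sn{X} certifies that no infinite reduction sequence issues from $X$. The crucial observation is that the congruence rules $\mathtt{appL}$ and $\mathtt{appR}$ of \cref{def:reduction} transport a reduction of a subterm up to the whole application: from \stepc{M}{M'} we obtain \stepc{(M \cdot N)}{(M' \cdot N)}, and symmetrically for the right operand.

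First I would build \sn{M}. Given $\mathtt{acc}\,h : \sn{(M \cdot N)}$, I produce $\mathtt{acc}\,h_M : \sn{M}$ by specifying, for each reductum \stepc{M}{M'}, a proof of \sn{M'}: applying $\mathtt{appL}$ yields \stepc{(M \cdot N)}{(M' \cdot N)}, so $h$ applied to this step returns a proof of \sn{(M' \cdot N)}; since that derivation is a direct subterm of $\mathtt{acc}\,h$, the induction hypothesis applies to it and yields a pair whose first projection is the desired \sn{M'}. The construction of \sn{N} is entirely symmetric, using $\mathtt{appR}$ and the second projection, and the lemma returns the pair $(\,\mathtt{acc}\,h_M\,,\,\mathtt{acc}\,h_N\,)$.

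I expect the only delicate point to be the termination argument. The recursive call is made on the proof of \sn{(M' \cdot N)} obtained by applying the field $h$ to its arguments, which is a structural subterm of the original accessibility witness $\mathtt{acc}\,h$; this is exactly what makes the recursion well founded and what Agda's termination checker accepts. Notably, no reasoning about the underlying contraction relation \contraction\ is needed, so the argument is uniform in \contraction: the congruence rules $\mathtt{appL}$ and $\mathtt{appR}$ carry the entire weight of the proof.
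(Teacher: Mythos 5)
Your proof is correct and matches the paper's (unstated but standard) argument: the paper only remarks that \cref{lemma:inversionSnApp} ``follows easily by induction,'' and the induction you describe — recursing on the accessibility witness of \sn{(M\cdot N)}, lifting each step \stepc{M}{M'} (resp.\ \stepc{N}{N'}) through {\tt appL} (resp.\ {\tt appR}) and projecting from the inductive hypothesis — is exactly that induction, with the recursive call landing on the proper component $h\,(M'\cdot N)\,(\mathtt{appL}\ \_)$ of the witness, just as in the paper's own {\tt closureSn∼α} code.
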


\sng\ is closed under alpha-conversion, as long as the supporting relation \stepcg\ is alpha-commutative.
The corresponding proof presented here is an adaptation of \cite{urciuoli2020}:

\begin{lemma}
\label{lemma:snClosedAlpha}
|closureSn∼α : Comm∼α _⟿_ → ∀ {M N} → sn M → M ∼α N → sn N|
\end{lemma}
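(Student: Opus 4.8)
The plan is to perform well-founded induction on the accessibility proof witnessing \sn{M}. Recall from \cref{def:sn} that \sn{M} unfolds to |Acc (dual ⟿) M|, so by the |acc| constructor it amounts to a function that, for every reductum $M'$ with \stepc{M}{M'}, returns a proof of \sn{M'}. To establish the goal \sn{N} it therefore suffices to exhibit, for an \emph{arbitrary} $N'$ with \stepc{N}{N'}, a proof of \sn{N'}; the whole lemma is then a single application of |acc| with the body described below.

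The first move is to transport the step \stepc{N}{N'} back across the conversion \conv{M}{N}. This is exactly what the hypothesis supplies: by alpha-commutativity \CommAlpha{\stepcg} (\cref{def:commutativity}), from \conv{M}{N} and \stepc{N}{N'} I obtain some $M'$ together with a step \stepc{M}{M'} and a conversion \conv{M'}{N'}. Feeding the accessibility witness for $M$ this $M'$ and the step \stepc{M}{M'} yields \sn{M'}. I then close the |acc| case by a recursive call of the lemma itself on the structurally smaller accessibility witness \sn{M'} and the conversion \conv{M'}{N'}, which produces \sn{N'}. In short, the argument is a backward simulation: each \stepcg-step leaving $N$ is matched by a step leaving $M$ that lands on an alpha-convertible term, and strong normalization is carried along the match.

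The content is routine once the diagram is set up, so I expect the only delicate points to be formalization hygiene rather than mathematics. Concretely: the recursion must be structural on the accessibility proof (i.e.\ on the witness \sn{M'} obtained from the |acc|-function) and not on the terms, so that Agda accepts it as decreasing; and \CommAlpha{\stepcg} must be applied in the correct orientation — with \conv{M}{N} as its first argument and the step out of $N$ as its second — so that the returned step emanates from $M$ and the resulting \conv{M'}{N'} is aligned to feed the induction hypothesis. Getting these two orientations right is really the whole of the work.
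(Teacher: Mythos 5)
Your proof is correct and is essentially identical to the paper's: the paper also inducts on the accessibility witness of \sn{M}, uses \cref{def:commutativity} to transport each step \stepc{N}{P} back to a matching step \stepc{M}{Q} with \conv{Q}{P}, and recurses on the structurally smaller witness \sn{Q}. The orientation points you flag are exactly how the formal code is arranged.
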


\begin{proof} 
By induction on the derivation of \sn{M}. To derive \sn{N} we need to prove \sn{P} for any \stepc{N}{P}. By \cref{def:commutativity} there exists some $Q$ such that \stepc{M}{Q} and \conv{Q}{P}. 
By \cref{def:sn}, \sn{Q} holds, i.e., $Q$ is accessible, and \sn{Q} is a proper component of the derivation of \sn{P}\footnote{Put in other words, every reduction beginning in $Q$ is at least one step shorter than every other reduction beginning in $M$.}.
Then, we can use the induction hypothesis on \sn{Q} together with \conv{Q}{P} and obtain \sn{P}.
\end{proof}

\noindent Exceptionally, we show the code of the proof above because it is very compact, and to reinforce the understanding of the principle of structural induction of \sng:
\begin{code}
closureSn∼α comm {M} {N} (acc i) M∼N =
  acc λ P P←N → let Q , M→Q , Q∼P = comm M∼N P←N
                in closureSn∼α comm (i Q M→Q) Q∼P
\end{code}
The $\lambda$ occurrence denotes Agda's entity for meta-level lambda terms.
|i Q M→Q| is of type \sn{Q}, and it is a proper component of |acc i| which is of type \sn{M}. |P←N| is of type |(dual _⟿_) P N| which in turn equals to |P ⟿ N|.
For the same reason |M→Q| is of type |(dual _⟿_) Q M|.

\subsection{Reducibile terms}
\label{subsec:reducibility}

Girard's proof of the Strong Normalization Theorem defines a relation between terms and types. A term that is related to some type is said to be \anonymdef{reducible}.
The proof is carried out in two steps: first, it is proven that every reducible term is strongly normalizing, and secondly that every typed term is reducible. 
In this section we shall define the logical relation of reducible terms, and after that we shall prove some of their properties, including the first step in Girard's proof (|CR1| of \cref{lemma:redproperties}).

Both in STLC and System~T (object-level) types are simple, so regarding this development they will be enough for our definition of the logical relation. 
We define them by:
\begin{definition}[Object-level types]
\hfill
\begin{verbatim}
data Type : Set where
  τ : Type
  _⇒_ : Type → Type → Type
\end{verbatim}
\end{definition}

The \anonymdef{relation of reducible terms} or logical relation is 
then defined by recursion on the types:
\begin{definition}[Reducible terms]
\hfill 
\label{def:red}
\begin{verbatim}
Red : Type → Λ → Set
Red τ M = sn M 
Red (α ⇒ β) M = ∀ {N} → Red α N → Red β (M · N)
\end{verbatim}
\end{definition}

\Redg\ is closed under alpha-conversion:
\begin{lemma}[Closure of \Redg\ under \convg] 
\label{lemma:redClosedAlpha}
\hfill 
\begin{code}
closureRed∼α : Comm∼α _⟿_ → ∀ {α M N} → Red α M → M ∼α N → Red α N
\end{code}
\end{lemma}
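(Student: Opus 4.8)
The plan is to proceed by induction on the object-level type, following the recursive structure of \cref{def:red}. The hypothesis that \stepcg\ is alpha-commutative (\cref{def:commutativity}) will be carried unchanged through the recursion and actually consumed only in the base case.

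In the base case the type is $\tau$, so $\Red{\tau}{M}$ unfolds to $\sn{M}$ and the goal $\Red{\tau}{N}$ to $\sn{N}$. This is exactly an instance of \cref{lemma:snClosedAlpha}: supplying it with the commutativity hypothesis together with $\sn{M}$ and \conv{M}{N} yields $\sn{N}$.

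In the inductive case the type has the form $\alpha \Rightarrow \beta$. Here $\Red{(\alpha \Rightarrow \beta)}{M}$ says that $\Red{\beta}{M \cdot P}$ holds for every $P$ with $\Red{\alpha}{P}$, and the goal is the same statement with $N$ in place of $M$. I would fix an arbitrary $P$ satisfying $\Red{\alpha}{P}$; from the hypothesis this immediately gives $\Red{\beta}{M \cdot P}$. To convert this into $\Red{\beta}{N \cdot P}$, I would first derive the conversion \conv{M \cdot P}{N \cdot P} from the application-congruence rule $\sim\!\cdot$ of \cref{subsec:alpha}, applied to \conv{M}{N} and the reflexivity instance \conv{P}{P} (recall that \convg\ is an equivalence relation). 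The induction hypothesis at the structurally smaller type $\beta$, applied to $\Red{\beta}{M \cdot P}$ and \conv{M \cdot P}{N \cdot P}, then delivers $\Red{\beta}{N \cdot P}$, as required.

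I do not anticipate any genuine obstacle: the argument is a routine structural induction on the type whose only external ingredients are \cref{lemma:snClosedAlpha} at the base case, and reflexivity of \convg\ together with the congruence constructor $\sim\!\cdot$ at the arrow case. The one bookkeeping point worth highlighting is that the commutativity hypothesis is threaded through every recursive call yet used only at the leaves ($\tau$); at each arrow node the real work is the congruence step lifting \conv{M}{N} to \conv{M \cdot P}{N \cdot P}, which is precisely what makes the smaller-type induction hypothesis applicable to the applied terms.
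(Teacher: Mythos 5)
Your proof is correct and follows essentially the same route as the paper, which also proceeds by induction on the type $\alpha$ and invokes \cref{lemma:snClosedAlpha} in the base case. The expansion of the arrow case via the $\sim\!\cdot$ congruence and reflexivity of \convg\ is exactly the intended (and in the paper, unstated) detail.
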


\begin{proof}
By induction on the type $\alpha$, and by using \cref{lemma:snClosedAlpha}.
\end{proof}

Next we have \textit{neutral terms}. We shall use a different characterization than the one given in \cite{girard1989}, 
and define them as the set of terms 
which when applied to any non-empty sequence of arguments, the result is never a redex, i.e., $M$ is neutral if and only if $MN_1N_2\dots N_n$ is not a redex for any $n>0$. 

So first, let us define the type of \underline{vectors of applications of terms}:
\begin{definition}[Vectors]
\hfill 
\begin{code}
data Vec : Λ → Λ → Set where
  nil : ∀ {M} → Vec M M
  cons : ∀ {M N} → Vec M N → ∀ {P} → Vec M (N · P)
\end{code}
\end{definition}
\noindent \texttt{Vec M V} will then indicate that $V = MN_1N_2\dots N_n$ for some $n \geq 0$, and we shall say that $M$ is the head. If $n=0$ then $M=V$, and $M$ is not applied to any argument (we will see right away why this is convenient despite our motivation required $n>0$).

Now we can give a precise characterization of the type of \underline{neutral terms}: 
\begin{definition}[Neutral terms]
\hfill 
\begin{code}
Ne M = ∀ {V} → Vec M V → ∀ {P Q} → ¬ (V · P) ▹ Q
\end{code}
\end{definition}
\noindent 
Note that we have added $P$ at the end of |V · P| to have at least one argument applied to $M$.

The next result follows immediately by induction on the definition of |Vector|: 

\begin{lemma}
\label{lemma:appNe}
|lemmaNe : ∀ {M} → Ne M → ∀ {N} → Ne (M · N)|
\end{lemma}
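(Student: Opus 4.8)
The plan is to reduce neutrality of $M \cdot N$ to neutrality of $M$ by \emph{reindexing} vectors of applications. The key observation is that a fully-applied term $(M \cdot N)\,N_1 \cdots N_k$ is the very same term as $M$ applied to the longer sequence $N, N_1, \ldots, N_k$. Hence any vector headed by $M \cdot N$ can be reinterpreted as a vector headed by $M$ carrying one extra leading argument, and this reinterpretation is exactly what transports the hypothesis \net{M} to the goal \net{(M \cdot N)}.

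First I would establish the auxiliary observation: for every $V$, if $\texttt{Vec}\,(M \cdot N)\,V$ then $\texttt{Vec}\,M\,V$. This goes by induction on the derivation of $\texttt{Vec}\,(M \cdot N)\,V$. In the \texttt{nil} case we have $V = M \cdot N$, and $\texttt{Vec}\,M\,(M \cdot N)$ is obtained as $\texttt{cons}\ \texttt{nil}$, since \texttt{nil} gives $\texttt{Vec}\,M\,M$ and a single application of \texttt{cons} appends $N$. In the \texttt{cons} case we have $V = V_0 \cdot P$ with a subderivation $\texttt{Vec}\,(M \cdot N)\,V_0$; the induction hypothesis yields $\texttt{Vec}\,M\,V_0$, and one further \texttt{cons} appends $P$ to produce $\texttt{Vec}\,M\,(V_0 \cdot P)$.

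With the auxiliary observation in hand the main statement is almost immediate. To prove \net{(M \cdot N)} I unfold the definition: given an arbitrary vector $v : \texttt{Vec}\,(M \cdot N)\,V$ together with terms $P$ and $Q$, I must rule out $(V \cdot P) \triangleright Q$. Applying the auxiliary observation to $v$ produces a vector $\texttt{Vec}\,M\,V$, and feeding this (with the same $P$ and $Q$) to the hypothesis \net{M} discharges the goal directly.

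Since both steps are purely structural, there is no genuine obstacle here; the only subtlety worth flagging is getting the reindexing right in the auxiliary observation — in particular that the base case $V = M \cdot N$ corresponds to a \emph{one}-element (not empty) vector on the $M$ side, which is precisely why the convention allowing $n = 0$ in the definition of $\texttt{Vec}$ is convenient.
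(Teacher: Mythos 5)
Your proof is correct and matches the paper's approach: the paper simply states that the lemma ``follows immediately by induction on the definition of \texttt{Vec}'', which is exactly your reindexing argument turning a vector headed by $M \cdot N$ into one headed by $M$. Your handling of the \texttt{nil} case as a one-element vector via \texttt{cons nil} is the right (and only) subtlety.
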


As to the main result in this section we have some properties about reducible terms, among them, the first part of Girard's proof, i.e., reducible terms are strongly normalizing (|CR1|).
Let |▹| be any relation that does not reduce variables, and such that for any vector $V$ with a variable at the head it follows $V$ is neutral under 
|▹|\footnote{Actually, we could have asked the second condition just for one specific variable and \cref{lemma:redproperties} would hold anyway (see the proof of {\tt CR1} when $\alpha$ is functional).}; using our definition of vectors (possibly with no applications) we can compact both conditions by:
\begin{definition}[Condition of $\triangleright$]
\hfill 
\label{def:conditionsContraction}
\begin{code}
Cond▹ = ∀ {x N} → Vec (v x) N → ∀ {Q} → ¬ N ▹ Q
\end{code}
\end{definition}
\noindent Then, for any such a relation |▹| we have that:
\begin{lemma}[Properties of reducible terms] 
\hfill
\begin{code}
CR1 : ∀ {α M} → Red α M → sn M 
CR2 : ∀ {α M N} → Red α M → M ⟿ N → Red α N
CR3 : ∀ {α M} → Ne M → (∀ {N} → M ⟿ N → Red α N) → Red α M
\end{code}
\label{lemma:redproperties}
\end{lemma}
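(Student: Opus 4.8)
The plan is to prove \texttt{CR1}, \texttt{CR2} and \texttt{CR3} \emph{simultaneously}, by induction on the structure of the type $\alpha$; the mutual dependence is essential, since the arrow case of each property appeals to the other two at the strictly smaller component types. For the base case $\alpha=\tau$, recall that $\Red{\tau}{M}$ unfolds to $\sn{M}$ (\cref{def:red}), so \texttt{CR1} is the identity. For \texttt{CR2}, from $\sn{M}$ and $\stepc{M}{N}$ I obtain $\sn{N}$ directly by unfolding \cref{def:sn}: $N$ is a proper \texttt{Acc}-predecessor of $M$, hence itself accessible. For \texttt{CR3}, the hypothesis already asserts that every reduct $N$ of $M$ is reducible at $\tau$, i.e.\ $\sn{N}$, which is exactly the premise of the \texttt{acc} constructor, so $\sn{M}$ follows (neutrality is not even needed at base type).

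For the inductive case $\alpha\Rightarrow\beta$, I assume all three properties at $\alpha$ and at $\beta$. The proofs of \texttt{CR1} and \texttt{CR2} are short. For \texttt{CR1}, I first observe that any variable is reducible at $\alpha$: $\net{(v\,x)}$ holds because, by \cref{def:conditionsContraction}, no vector headed by a variable can be a redex; moreover $v\,x$ has no reducts whatsoever (again by \cref{def:conditionsContraction}, since the \texttt{redex} rule is the only applicable one and it is excluded), so the reduct-premise of \texttt{CR3} at $\alpha$ is vacuous and yields $\Red{\alpha}{(v\,x)}$. Feeding this to $\Red{\alpha\Rightarrow\beta}{M}$ gives $\Red{\beta}{(M\cdot v\,x)}$, whence $\sn{(M\cdot v\,x)}$ by \texttt{CR1} at $\beta$, and finally $\sn{M}$ by \cref{lemma:inversionSnApp}. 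For \texttt{CR2}, given $\Red{\alpha\Rightarrow\beta}{M}$, $\stepc{M}{N}$ and an arbitrary $\Red{\alpha}{P}$, we have $\Red{\beta}{(M\cdot P)}$, and since $\stepc{M\cdot P}{N\cdot P}$ by rule \texttt{appL}, \texttt{CR2} at $\beta$ delivers $\Red{\beta}{(N\cdot P)}$.

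The crux is \texttt{CR3} at $\alpha\Rightarrow\beta$. We are given $\net{M}$ and that every reduct of $M$ is reducible at $\alpha\Rightarrow\beta$, and we must show that for every $\Red{\alpha}{P}$ we have $\Red{\beta}{(M\cdot P)}$. First, \texttt{CR1} at $\alpha$ gives $\sn{P}$, and I proceed by an \emph{inner} well-founded induction on this accessibility derivation. To establish $\Red{\beta}{(M\cdot P)}$ I invoke \texttt{CR3} at $\beta$, which requires $\net{(M\cdot P)}$ — immediate from $\net{M}$ by \cref{lemma:appNe} — and that every reduct $R$ of $M\cdot P$ is reducible at $\beta$. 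Analyzing $\stepc{M\cdot P}{R}$: the \texttt{redex} case is impossible precisely because $M$ is neutral, so $M\cdot P$ is not a redex; the \texttt{appL} case $R=N\cdot P$ with $\stepc{M}{N}$ is handled by the outer hypothesis, which gives $\Red{\alpha\Rightarrow\beta}{N}$, applied to $\Red{\alpha}{P}$; and the \texttt{appR} case $R=M\cdot P'$ with $\stepc{P}{P'}$ is where the inner induction pays off, since \texttt{CR2} at $\alpha$ gives $\Red{\alpha}{P'}$, $P'$ is a proper predecessor of $P$, and the inner induction hypothesis yields $\Red{\beta}{(M\cdot P')}$.

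I expect this \texttt{CR3} arrow case to be the main obstacle. The naive attempt to treat the \texttt{appR} reduction of the argument directly fails, because $\Red{\alpha\Rightarrow\beta}{M}$ is not yet available — it is exactly what we are proving — and it is the nested induction on $\sn{P}$, made possible by \texttt{CR1} and kept inside reducibility by \texttt{CR2}, that closes the loop. One must also check that the neutrality hypothesis is precisely what rules out a top-level contraction of $M\cdot P$, and that the base-type clauses of all three statements are arranged so that the simultaneous induction is well founded.
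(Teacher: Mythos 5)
Your proof is correct and follows essentially the same route as the paper's: simultaneous induction on the type, with the variable trick (via \texttt{CR3} and \cref{def:conditionsContraction}) plus \cref{lemma:inversionSnApp} for \texttt{CR1}, rule \texttt{appL} for \texttt{CR2}, and a nested well-founded induction on $\sn{P}$ for the arrow case of \texttt{CR3}. The only cosmetic difference is that you instantiate the variable argument with an arbitrary $v\,x$ where the paper uses $v_0$, which the paper itself notes is immaterial.
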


\begin{proof}
By mutual induction on the type $\alpha$:
\begin{itemize}
    \item Case $\alpha=\tau$:
    \begin{itemize}
        \item[{\tt CR1}] By \cref{def:red}, $\Red{\tau}{M} = \sn{M}$, so {\tt CR1} is a tautology.
        \item[{\tt CR2}] Immediate by \cref{def:sn}.
        \item[{\tt CR3}] Analogous to {\tt CR2}.
    \end{itemize}
    \item Case $\alpha=\beta\Rightarrow\gamma$:
    \begin{description}
        \item[{\tt CR1}]
        By \cref{def:conditionsContraction} we have both that \net{v_0}, and that \stepc{v_0}{N} is absurd for any $N$. As a direct consequence of the latter, the second hypothesis or argument of {\tt CR3} follows by vacuity,
        and so we can use the main induction hypothesis {\tt CR3} and obtain \Red{\beta}{v_0}. 
        Now, by \cref{def:red} on \Red{(\beta\Rightarrow\gamma)}{M}, we obtain \Red{\gamma}{(Mv_0)}, and by the induction hypothesis \sn{(Mv_0)}. 
        Finally, by \cref{lemma:inversionSnApp} we get \sn{M}.
        
        \item[{\tt CR2}] According to \cref{def:red}, to prove the thesis \Red{(\beta\Rightarrow\gamma)}{N} we have to prove \Red{\gamma}{(NP)} for any \Red{\beta}{P}. 
        By \cref{def:red}, again, the hypothesis \Red{(\beta\Rightarrow\gamma)}{M} tells us \Red{\gamma}{(MP)}, and by the \texttt{appL} rule of \cref{def:reduction} on \stepc{M}{N} we can derive \stepc{MP}{NP}, so we can use the induction hypothesis and obtain \Red{\gamma}{(NP)} as desired.
        
        \item[{\tt CR3}] Let \Red{\beta}{P}. To derive our desired result \Red{\gamma}{(MP)} and by using the induction hypothesis, we need to feed it with the required hypotheses or arguments: (1) \net{(MP)}, and (2) that for every $N'$, \stepc{MP}{N'} implies \Red{\gamma}{N'}. (1) follows by \cref{lemma:appNe}.
        As to (2), first of all, by the main induction hypothesis {\tt CR1} we get \sn{P}. Now we continue by a nested induction on the derivation of \sn{P}\footnote{In the code, it means to define an auxiliary function in the current scope.}. Let us analyse every possible derivation of \stepc{MP}{N'}.
        \begin{itemize}
            \item Case {\tt redex}: \contracts{MP}{N'} is absurd by \cref{def:conditionsContraction}.
            
            \item Case {\tt appL}: If \stepc{MP}{M'N''} follows from \stepc{M}{M'} with $N'=M'N''$ then by (2) we get \mbox{\Red{(\beta\Rightarrow\gamma)}{M'}}, and so by \cref{def:red}, \Red{\gamma}{(M'N'')}. 
            
            \item Case {\tt appR}: If \stepc{MP}{MP'} follows from \stepc{P}{P'} with $N'=MP'$, then by \cref{def:sn} we obtain \sn{P'}, which is a proper component of \sn{P}, and so we can continue by induction on \sn{P'}.
        \end{itemize}
    \end{description}
\end{itemize}
\end{proof}

Next we have some general definitions regarding the assignment of types.
First, there are \anonymdef{contexts} (of variable declarations). They are defined as list of pairs, possibly with duplicates:

\begin{definition}
|Cxt = List (V × Type)|
\end{definition}

\noindent Then there is the relation of \anonymdef{membership} between variables and contexts.
We shall write $x\in\Gamma$ and say that $x$ is the \emph{first} variable in $\Gamma$, searched from left to right. Below is the inductive definition:

\begin{code}
data _∈_ : V → Cxt → Set where
  here  : ∀ {x α Γ} → x ∈ Γ $\Cup$ x ∶ α
  there : ∀ {x y α Γ} → x ≢ y → x ∈ Γ → x ∈ Γ $\Cup$ y ∶ α 
\end{code}

\noindent |$\Gamma$ $\Cup$ x : $\alpha$| is syntax-sugar for |(x , $\alpha$) ∷ $\Gamma$|. 
Finally, there is a \anonymdef{lookup} function on contexts such that it returns the type of the first variable (provided it is declared), searched in the same fashion, and defined:
\begin{code}[numbers=left]
_⟨_⟩ : ∀ {x} → (Γ : Cxt) → x ∈ Γ → Type
[]              ⟨ ()        ⟩ 
((k , d) ∷ xs) ⟨ here      ⟩ = d
((k , d) ∷ xs) ⟨ there _ p ⟩ = xs ⟨ p ⟩
\end{code}
\noindent In the second line, {\tt ()} is an \emph{absurd} pattern, and it tells Agda to check that there is no possible way of having an object of type $x\in\texttt{[]}$, for any $x$.

To end this section, we present \anonymdef{reducible substitutions}. We shall say a substitution is reducible under some context $\Gamma$ if and only if it maps every variable in $\Gamma$ to a reducible term of the same type:

\begin{definition}
\label{def:redsubst}
|RedSubst σ Γ = ∀ x → (k : x ∈ Γ) → Red (Γ ⟨ k ⟩) (σ x)|
\end{definition}

\noindent The next results follow immediately by definition:

\begin{lemma}
\label{lemma:iotaisred}
\begin{verbatim}
Red-ι : ∀ {Γ} → RedSubst ι Γ
\end{verbatim}
\end{lemma}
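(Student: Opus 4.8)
The plan is to unfold the definition of \RedSubstg\ (\cref{def:redsubst}) and thereby reduce the goal to the reducibility of a single variable. To prove \RedSubst{\iota}{\Gamma}, I would fix an arbitrary variable $x$ together with a membership proof $k : x \in \Gamma$, and then have to establish \Red{(\Gamma \langle k \rangle)}{(\iota\,x)}. Since $\iota$ maps every variable to itself, $\iota\,x$ is definitionally the variable term $v\,x$, so the goal becomes \Red{(\Gamma \langle k \rangle)}{(v\,x)}. In other words, the entire content of the lemma is the fact that a variable is reducible at every type; everything else is unfolding of definitions.

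To discharge this remaining obligation I would appeal to {\tt CR3} of \cref{lemma:redproperties} with $M = v\,x$, which demands two premises. First, \net{(v\,x)}: a variable is neutral. This follows from \cref{def:conditionsContraction}, because any vector built from $v\,x$ carries $v\,x$ at its head, and {\tt Cond}\contraction\ states precisely that no such vector — in particular none of the form $V \cdot P$ required by the definition of \netg\ — can be a redex. Second, I must show that every reductum $N$ of $v\,x$ is reducible; this holds \emph{vacuously}, because $v\,x$ has no reducta at all. Inspecting the reduction rules of \cref{def:reduction}, a bare variable matches neither {\tt abs}, {\tt appL}, nor {\tt appR}, since it has no abstraction or application structure, so the only remaining possibility is {\tt redex}, which would require \contracts{(v\,x)}{N}; but this is ruled out by {\tt Cond}\contraction\ applied to the trivial ({\tt nil}) vector. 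With both premises in hand, {\tt CR3} yields \Red{(\Gamma\langle k\rangle)}{(v\,x)}, and the lemma follows.

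I do not anticipate any genuine difficulty here: once the variable-reducibility fact is isolated, the proof is essentially definitional, which matches the paper's remark that the result follows immediately. The only point demanding slight care is the vacuous-reduction premise of {\tt CR3} — one must verify by case analysis on \cref{def:reduction} that a variable is irreducible, and this rests ultimately on the standing assumption {\tt Cond}\contraction\ that the contraction relation never reduces variables.
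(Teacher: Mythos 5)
Your proof is correct and follows the same route the paper intends: after unfolding \cref{def:redsubst} the lemma reduces to the reducibility of a bare variable, which is established via {\tt CR3} of \cref{lemma:redproperties} using \cref{def:conditionsContraction} exactly as the paper itself does for $v_0$ inside the proof of {\tt CR1} (functional case). The paper merely records the result as ``immediate by definition,'' and your write-up is a faithful expansion of that remark with no gaps.
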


\begin{lemma}
\label{lemma:updateRedSubst}
\hfill 
\begin{code} 
Red-upd : RedSubst σ Γ → ∀ x → Red α N → RedSubst (σ ≺+ (x , N)) (Γ $\Cup$ x : α)
\end{code}
\end{lemma}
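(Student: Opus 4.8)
The plan is to unfold the definition of \RedSubstg\ (\cref{def:redsubst}) in the conclusion and then proceed by case analysis on a membership proof. After unfolding, I must show that for an arbitrary variable $y$ and an arbitrary proof $k : y \in \Gamma \Cup x : \alpha$ of membership, the image $(\sigma\ \updg\ (x, N))\,y$ is reducible of the type $(\Gamma \Cup x : \alpha)\langle k\rangle$ that the extended context assigns to $y$. Since that context carries the pair $(x, \alpha)$ at its head, the proof $k$ can only have been built by one of the two constructors \texttt{here} or \texttt{there}, so the argument splits into exactly these two cases, and in each case the goal is settled purely by how \cref{def:update} and the lookup function compute.

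In the \texttt{here} case, matching the constructor unifies $y$ with $x$. The lookup then returns the head type $\alpha$, while the update takes its first branch, since $x$ equals itself, giving image $N$. Hence the goal reduces to \Red{\alpha}{N}, which is precisely the third hypothesis of the lemma.

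In the \texttt{there} case the proof comes with a witness that $y$ and $x$ are distinct, together with a sub-derivation $p : y \in \Gamma$. Now the lookup recurses past the head and returns $\Gamma\langle p\rangle$, and because $y$ differs from $x$ the update takes its second branch and leaves the image untouched, equal to $\sigma\,y$. The remaining goal \Red{\Gamma\langle p\rangle}{\sigma\,y} is exactly what the first hypothesis \RedSubst{\sigma}{\Gamma} delivers when applied to $y$ and $p$.

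I do not expect any real obstacle: the whole lemma is driven by definitional computation, and the single point requiring attention is that the branch selected by the decidable equality inside the update of \cref{def:update} matches the membership constructor being analysed --- the \emph{yes} branch with \texttt{here} and the \emph{no} branch with \texttt{there} --- so that the goal collapses to the appropriate hypothesis in each case.
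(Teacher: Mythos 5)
Your proposal is correct and matches the paper's treatment: the paper dispatches this lemma (together with \cref{lemma:iotaisred}) as following ``immediately by definition,'' and your case split on the membership proof --- \texttt{here} reducing the goal to \Red{\alpha}{N} and \texttt{there} reducing it to an instance of \RedSubst{\sigma}{\Gamma} --- is exactly that definitional computation spelled out. The only mechanical detail left implicit is that in Agda the \emph{no} branch of the update is reached by discriminating on \texttt{x \deceq\ y} and refuting the \emph{yes} case with the inequality witness carried by \texttt{there}, but this does not affect the argument.
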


\section{STLC}
\label{sec:stlc}

The syntax and theories of substitution, alpha- and beta-conversion for STLC are obtained by instantiating the framework with:
\begin{code}
module STLC where
open import CFramework.CTerm ⊥  
...
open import CFramework.CReduction ⊥ _▹β_ as Reduction renaming (_⟿_ to _→β_)
\end{code}

Next is the assignment of types in STLC:

\begin{code}
data _⊢_∶_ (Γ : Cxt) : $\Lambda$ → Type → Set where
  ⊢var : ∀ {x} → (k : x ∈ Γ) → Γ ⊢ v x ∶ Γ ⟨ k ⟩
  ⊢abs : ∀ {x M α β} → Γ $\Cup$ x ∶ α ⊢ M ∶ β → Γ ⊢ ƛ x M ∶ α ⇒ β
  ⊢app : ∀ {M N α β} → Γ ⊢ M ∶ α ⇒ β → Γ ⊢ N ∶ α → Γ ⊢ M · N ∶ β
\end{code}

\subsection{The Strong Normalization Theorem in STLC}
\label{sec:normalizationSTLC}

Following Girard's proof, first we need to prove that every reducible term is \sng. We shall use |CR1| of \cref{lemma:redproperties} for that matter, so we need to prove that \betacontraction\ satisfies conditions in \cref{def:conditionsContraction}.

\begin{lemma}
|cond▹β : ∀ {x N} → Vec (v x) N → ∀ {Q} → ¬(N ▹β Q)|
\end{lemma}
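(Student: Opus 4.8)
The plan is to proceed by case analysis on the derivation of the vector $\mathtt{Vec}\,(v\,x)\,N$, of which there are exactly two, built by the constructors \texttt{nil} and \texttt{cons}. The single observation driving the whole argument is that \betacontraction\ (\cref{def:betacontraction}) has \texttt{beta} as its only constructor, so every \betacontraction-redex is necessarily an application whose left component is an abstraction; conversely, a term carrying a variable at its head — which is precisely what $\mathtt{Vec}\,(v\,x)\,N$ witnesses — can never be such a redex.

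In the \texttt{nil} case we have $N = v\,x$. A bare variable does not match the left-hand side $(\lambda y\,M)\,N'$ of the \texttt{beta} rule, so $N \betacontraction Q$ is uninhabited and the goal is discharged by an absurd pattern.

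In the \texttt{cons} case we have $N = N'\,P$ together with a sub-derivation of $\mathtt{Vec}\,(v\,x)\,N'$. For $N'\,P \betacontraction Q$ to hold it would have to arise from \texttt{beta}, which forces $N'$ to be an abstraction $\lambda y\,M$. But $N'$ also carries the variable head $v\,x$: a further case split on its vector shows that $N'$ is either the variable $v\,x$ itself (inner \texttt{nil}) or again an application (inner \texttt{cons}), and in neither subcase is $N'$ an abstraction. Hence no \texttt{beta} derivation exists and the goal again closes by an absurd pattern.

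I expect no real obstacle here: the lemma is a routine inversion on the redex structure, and Agda discharges the \texttt{nil} case together with both \texttt{cons} subcases by the absurd pattern \texttt{()}, since the type-checker sees directly that the application's left component is not syntactically an abstraction. The only mild subtlety is remembering that the \texttt{cons} case requires the nested inspection of the inner vector — equivalently, the one-step observation that a term with a variable at its head is never an abstraction — rather than closing immediately, because with $N'$ left abstract the checker cannot yet rule out $N' = \lambda y\,M$.
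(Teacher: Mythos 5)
Your proposal is correct and matches the paper's proof, which simply states that the result is ``immediate by contradiction'' from the two hypotheses --- i.e.\ exactly the case analysis on the \texttt{Vec} derivation you describe, showing that a term with a variable at its head is never an application of an abstraction and hence never a \betacontraction-redex. Your added remark that the \texttt{cons} case needs a nested inspection of the inner vector before the absurd pattern can fire is an accurate and useful elaboration of what the Agda term actually does.
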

\begin{proof}
Immediate by contradiction from |Vec (x N) N| and |N ▹β Q|.
\end{proof}
\noindent Then we can open the following modules and inherit \cref{lemma:redproperties} for STLC, particularly |CR1|:
\begin{code}
open import CFramework.CReducibility ⊥ _▹β_ as Reducibility
open Reducibility.RedProperties cond▹β
\end{code}

Next we have to prove that every typed terms is reducible; we shall refer to this as the \anonymdef{main} lemma. To present the proof, we are going to need some preparatory results.
First, by \cref{lemma:betacompatsubst} together with \cref{lemma:compatSubst} we have that \stepg\ is both alpha-compatible with substitution, and alpha-commutative:

\begin{lemma}
\label{lemma:stepBetacompatSubst}

\label{lemma:stepBetaCommutesAlpha}
|Compat∙ _→β_ × Comm∼α _→β_|
\end{lemma}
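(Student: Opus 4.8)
The plan is to read the statement through the definition of beta-reduction. By \cref{def:betareduction}, \stepg\ is literally the syntactic closure of beta-contraction, so |_→β_| is definitionally |_⟿_ _▹β_|. Consequently both conjuncts are obtained by transporting properties of the contraction relation |_▹β_| to its closure through the lifting lemmas of \cref{lemma:compatSubst,lemma:commutativity}. The only real work is to feed each lifting lemma the hypotheses it asks for, and these are exactly the three facts already packaged in \cref{lemma:betacompatsubst}.

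First I would split the product goal into its two conjuncts. For |Compat∙ _→β_| I apply |compat⟿∙| of \cref{lemma:compatSubst}, which demands a proof of |Preserves# _▹β_| together with a proof of |Compat∙ _▹β_|; both are projections of \cref{lemma:betacompatsubst}. For |Comm∼α _→β_| I apply |commut⟿α| of \cref{lemma:commutativity}, which demands |Preserves# _▹β_|, |Compat∙ _▹β_| and |Comm∼α _▹β_|, again all three being projections of \cref{lemma:betacompatsubst}. Pairing the two outcomes with the product constructor then closes the goal.

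I do not expect any genuine obstacle here: every induction on reduction derivations has already been discharged inside \cref{lemma:compatSubst,lemma:commutativity}, and the base facts for \betacontraction\ inside \cref{lemma:betacompatsubst}. The one point worth keeping in mind is the cascade structure remarked upon after \cref{lemma:commutativity}: |commut⟿α| requires strictly more hypotheses than |compat⟿∙|, so even though the right conjunct only mentions alpha-commutativity one must still thread |Preserves#| and |Compat∙| into it. Since \cref{lemma:betacompatsubst} bundles precisely those two together with |Comm∼α _▹β_|, no additional lemma is needed and the proof reduces to a short composition of the results already at hand.
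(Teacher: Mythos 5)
Your proposal matches the paper's own argument exactly: the paper derives this lemma by combining \cref{lemma:betacompatsubst} with the lifting lemmas |compat⟿∙| and |commut⟿α| of \cref{lemma:compatSubst,lemma:commutativity}, which is precisely the composition you describe. Your observation about the cascade of hypotheses needed for |commut⟿α| is also consistent with the paper's remark following those lemmas, so there is nothing to add.
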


\noindent Secondly, it is immediate that $(\lambda x N)N$ is neutral for every $x$, $M$ and $N$:

\begin{lemma}
|lemmaβNe : ∀ {x M N} → Ne ((ƛ x M) · N)|
\end{lemma}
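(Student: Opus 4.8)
The plan is to unfold the definition of neutral terms and reduce the whole claim to a shape mismatch. By the definition of \netg, establishing \net{((\lambda x M)\, N)} amounts to showing that for every $V$ with |Vec ((ƛ x M) · N) V|, and every $P$ and $Q$, the contraction \contracts{(V\, P)}{Q} is impossible. Here the contraction relation is \betacontraction, whose only constructor |beta| fires exactly on redexes of the form $(\lambda z\, M_0)\, N_0$; so the essential observation is that a beta-redex must carry an \emph{abstraction} in function position, whereas the function position of $V\, P$ is $V$, which will turn out never to be an abstraction.

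First I would case-split on the evidence |Vec ((ƛ x M) · N) V|. In the |nil| case $V$ is $(\lambda x M)\, N$, and in the |cons| case $V$ is of the form $N'\, P'$; in either case $V$ is built with the application constructor |_·_|, never as an abstraction $\lambda z\, M_0$. Consequently $V\, P$ has an application in its function position and therefore cannot be unified with the redex pattern demanded by |beta|, so \contracts{(V\, P)}{Q} has no possible derivation and the goal follows by vacuity. Notice that I do not need to recurse over the vector: a single split on |nil| versus |cons| already pins down that $V$ is an application, which is all the |beta| rule is sensitive to. This is exactly where admitting the empty vector ($n=0$) in \texttt{Vec} pays off, since the head $(\lambda x M)\, N$ is itself already an application.

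In Agda this should collapse to two clauses, each closed by an absurd pattern: after matching |nil| (respectively |cons|), the attempted match on |beta| fails to unify the application $(\lambda x M)\, N$ (respectively $N'\, P'$) with an abstraction, so the contraction argument is uninhabited. I do not expect a genuine obstacle here; the only subtlety is ordering the analysis so that the split on the vector happens before (or together with) the inspection of the contraction, giving the unifier enough structural information to discharge |beta| as impossible. The outcome is a direct two-line proof, and it is precisely the neutrality fact that will later let us invoke |CR3| of \cref{lemma:redproperties} on the redex $(\lambda x M)\, N$.
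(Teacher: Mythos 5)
Your proof is correct and matches the paper's (unstated, ``immediate'') argument: a case split on the vector evidence shows $V$ is always an application, so $V\,P$ cannot match the abstraction-headed redex pattern of the |beta| rule, and the contraction is refuted by an absurd pattern. No gap.
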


\noindent And finally, since the main lemma proceeds by induction on the derivation of the typing judgement, and the case of abstractions is quite complex, it turns out to be convenient to have a separate lemma for this:

\begin{lemma}
\label{lemma:abstraction}
|lemmaAbs : ∀ {x M N α β} → sn M → sn N| 
\hfill
\begin{code}
→ (∀ {P} → Red α P → Red β (M [ P / x ])) → Red α N → Red β (ƛ x M · N)
\end{code}
\end{lemma}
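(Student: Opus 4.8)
The plan is to apply \texttt{CR3} of \cref{lemma:redproperties} to the redex $(\lambda x M) N$. Since this term is neutral (lemmaβNe), \texttt{CR3} reduces the goal \Red{\beta}{(\lambda x M) N} to showing that every one-step reductum $Q$ with \stepc{(\lambda x M) N}{Q} satisfies \Red{\beta}{Q}. I would carry this out by a nested induction on the two accessibility hypotheses \sn{M} and \sn{N}, recursing on whichever of the two strictly decreases in each reduction case.

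The reductum $Q$ is produced by exactly one of the rules of \cref{def:reduction}. In the \texttt{redex} case the only contraction available is the beta-step of \cref{def:betacontraction}, so $Q = M[N/x]$; here the substitution premise (the third hypothesis), instantiated with the argument $N$ together with \Red{\alpha}{N}, yields \Red{\beta}{M[N/x]} directly. In the \texttt{appR} case $Q = (\lambda x M) N'$ for some \stepc{N}{N'}; by \texttt{CR2} of \cref{lemma:redproperties} applied to \Red{\alpha}{N} we obtain \Red{\alpha}{N'}, and since \sn{N'} is a proper subderivation of \sn{N} (\cref{def:sn}), the induction hypothesis applied with \sn{M}, \sn{N'}, the unchanged substitution premise, and \Red{\alpha}{N'} gives \Red{\beta}{(\lambda x M) N'}.

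The \texttt{appL} case is the main obstacle. There $Q = (\lambda x M') N$ where the abstraction reduces via the \texttt{abs} rule from some \stepc{M}{M'}, and to invoke the induction hypothesis on the smaller \sn{M'} I must first re-establish the substitution premise for $M'$, i.e.\ that \Red{\alpha}{P} implies \Red{\beta}{M'[P/x]} for every $P$. The difficulty is precisely that reduction is \emph{not} compatible with substitution, so one cannot simply reduce $M[P/x]$ to $M'[P/x]$. Instead, given \Red{\alpha}{P}, the original premise yields \Red{\beta}{M[P/x]}; then, taking $\sigma = \upd{\iota}{x}{P}$ so that $M[P/x]$ and $M'[P/x]$ are the images of $M$ and $M'$ under $\sigma$, alpha-compatibility of reduction (\texttt{Compat∙} of \cref{lemma:stepBetacompatSubst}) applied to \stepc{M}{M'} produces a term $R$ with \stepc{M[P/x]}{R} and \conv{R}{M'[P/x]}. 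Now \texttt{CR2} gives \Red{\beta}{R}, and closure of \Redg\ under alpha-conversion (\cref{lemma:redClosedAlpha}) transports this along \conv{R}{M'[P/x]} to \Red{\beta}{M'[P/x]}. With the substitution premise thus re-established for $M'$, the induction hypothesis applied to \sn{M'}, \sn{N}, and \Red{\alpha}{N} delivers \Red{\beta}{(\lambda x M') N}, completing the case and hence the proof.
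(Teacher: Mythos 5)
Your proposal is correct and follows essentially the same route as the paper's proof: \texttt{CR3} on the neutral term, case analysis on the reductum, and in the \texttt{appL} case the same chain of \texttt{Compat$\bullet$}, \texttt{CR2}, and closure of \Redg\ under \convg\ to re-establish the substitution premise for $M'$. The only cosmetic difference is that the paper explicitly cites alpha-commutativity of \stepg\ when invoking \cref{lemma:redClosedAlpha}, which your argument uses implicitly.
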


\begin{proof}
By induction on the derivations of \sn{M} and \sn{N}. 
We shall refer to hypotheses \sn{M}, \sn{N}, $\forall \{P\}\ \mbox{\tt →}\ \Red{\alpha}{P}\ \mbox{\tt →}\ \Red{\beta}{(M\mbox{\unary{P}{x}})}$ and \Red{\alpha}{N} as (1) through (4) respectively.
So, to use |CR3| of \cref{lemma:redproperties} to prove that the 
neutral term
$(\lambda x M)N$ is reducible of type $\beta$ (the thesis of this lemma) we need to show that every reductum  
is reducible (the second explicit hypothesis of the mentioned lemma). So, let us analyze every possible case:
\begin{itemize}
    \item Case {\tt redex}: If \step{(\lambda x M)N}{M\mbox{\unary{N}{x}}} then we can quickly derive that $M\mbox{\unary{N}{x}}$ is reducible from (3) and (4). 
    
    \item Case {\tt appL}: If \mbox{\step{(\lambda x M)N}{(\lambda x M')N}} follows from \mbox{\step{M}{M'}} then, to use the induction hypothesis on \sn{M'}, we need to provide the requested hypotheses (1) through (4) correctly instantiated. 
    (1) follows from \cref{def:sn}. (2) and (4) are direct. 
    As to (3), we need to prove that \Red{\beta}{(M'\mbox{\unary{P}{x}})} holds for any \Red{\alpha}{P}. By \cref{lemma:stepBetacompatSubst} we know that there exists some $R$ such that \step{M\mbox{\unary{P}{x}}}{R} and \conv{R}{M'\mbox{\unary{P}{x}}}. By hypothesis (3) it follows \Red{\beta}{(M\mbox{\unary{P}{x}})}, so by |CR2| of \cref{lemma:redproperties} we obtain \Red{\beta}{R}. Finally, we can use \cref{lemma:stepBetaCommutesAlpha} together with inherited \cref{lemma:redClosedAlpha} to derive \Red{\beta}{(M'\mbox{\unary{P}{x}})}. 
    
    \item Case {\tt appR}: If \mbox{\step{(\lambda x M)N}{(\lambda x M)N'}} follows from \step{N}{N'} then, by \cref{def:sn} we have \sn{N'}, and by |CR2| of \cref{lemma:redproperties} we obtain \Red{\alpha}{N'}, therefore we can use the induction hypothesis on \sn{N'} and derive \Red{\beta}{((\lambda x M)N')}.
\end{itemize}
\end{proof}

Now, to use the previous result in the main lemma, we are going to need a stronger induction hypothesis in order to derive the third argument, namely
$\forall \{P\}\ \mbox{\tt →}\ \Red{\alpha}{P}\ \mbox{\tt →}\ \Red{\beta}{(M\mbox{\unary{P}{x}})}$.
We shall see that by stating the main lemma as next we can easily derive it:

\begin{lemma}
\label{lemma:subst}
|main : ∀ {α M σ Γ} → Γ ⊢ M ∶ α → RedSubst σ Γ → Red α (M ∙ σ)|
\end{lemma}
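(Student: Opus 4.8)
The plan is to prove the statement by induction on the derivation of the typing judgement \types{\Gamma}{M}{\alpha}, carrying the reducibility hypothesis \RedSubst{\sigma}{\Gamma} unchanged through the two easy cases and extending it in the abstraction case; since the statement quantifies over the substitution, each induction hypothesis is available for an arbitrary reducible substitution, a freedom the abstraction case will exploit. There is one case per typing rule. For a variable, $M = v\,x$ with $x \in \Gamma$ and $\alpha = \Gamma\langle k\rangle$; since $v\,x \bullet \sigma = \sigma x$ by \cref{def:actionSubst}, the goal \Red{\alpha}{(\sigma x)} is exactly the instance of \RedSubst{\sigma}{\Gamma} provided by \cref{def:redsubst}. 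For an application, $M = M_1 \cdot M_2$ with premises \types{\Gamma}{M_1}{\delta \Rightarrow \alpha} and \types{\Gamma}{M_2}{\delta}; the induction hypotheses give \Red{(\delta \Rightarrow \alpha)}{(M_1 \bullet \sigma)} and \Red{\delta}{(M_2 \bullet \sigma)}, so unfolding \cref{def:red} on the former and feeding it the latter yields \Red{\alpha}{((M_1 \bullet \sigma) \cdot (M_2 \bullet \sigma))}, which is \Red{\alpha}{((M_1 \cdot M_2) \bullet \sigma)} by \cref{def:actionSubst}.

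The abstraction case carries the real content. Here $M = \lambda x\,M_1$, the type is an arrow $\alpha = \gamma \Rightarrow \beta$, and the premise is \types{\Gamma \Cup x : \gamma}{M_1}{\beta}. By \cref{def:actionSubst} the action $(\lambda x\,M_1) \bullet \sigma$ equals $\lambda y\,B$, where $B = M_1 \bullet \mbox{\upd{\sigma}{x}{v\,y}}$ and $y = \chi\mbox{\restr{\sigma}{\lambda x\,M_1}}$. Unfolding \cref{def:red}, it then suffices to fix an arbitrary $N$ with \Red{\gamma}{N} and establish \Red{\beta}{((\lambda y\,B) \cdot N)}. I intend to obtain the latter from \cref{lemma:abstraction}, instantiated with bound variable $y$, body $B$ and argument $N$, which reduces the goal to discharging its four premises.

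Of those four premises, \Red{\gamma}{N} is the hypothesis just fixed and \sn{N} follows from it by \texttt{CR1} of \cref{lemma:redproperties}. For the substitution premise, fix $P$ with \Red{\gamma}{P}: by \cref{lemma:updateRedSubst} the data \RedSubst{\sigma}{\Gamma} and \Red{\gamma}{P} yield \RedSubst{(\mbox{\upd{\sigma}{x}{P}})}{(\Gamma \Cup x : \gamma)}, so the induction hypothesis on the premise gives \Red{\beta}{(M_1 \bullet \mbox{\upd{\sigma}{x}{P}})}. Since $y = \chi\mbox{\restr{\sigma}{\lambda x\,M_1}}$, \cref{lemma:chi} supplies \freshr{y}{\sigma}{\lambda x\,M_1}, and then \cref{lemma:composition} gives \conv{B\mbox{\unary{P}{y}}}{M_1 \bullet \mbox{\upd{\sigma}{x}{P}}} (recall that \unary{P}{y} denotes the action of $\iota\,\updg\,(y,P)$). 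Transporting reducibility along this conversion, using symmetry of \convg and \cref{lemma:redClosedAlpha}, produces the required \Red{\beta}{(B\mbox{\unary{P}{y}})}. The remaining premise \sn{B} is obtained by the same construction specialised to $P := v\,y$: a variable is reducible at every type, since $v\,y$ is neutral and cannot be contracted by \cref{def:conditionsContraction}, so \texttt{CR3} of \cref{lemma:redproperties} applies with a vacuous second hypothesis, exactly as \Red{\beta}{v_0} is derived in the proof of \texttt{CR1}. Hence \Red{\gamma}{(v\,y)} holds, which gives \RedSubst{(\mbox{\upd{\sigma}{x}{v\,y}})}{(\Gamma \Cup x : \gamma)}, and the induction hypothesis yields \Red{\beta}{B}, so \sn{B} by \texttt{CR1}. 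With all four premises available, \cref{lemma:abstraction} closes the case and the induction.

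The hard part will be this abstraction case, and within it the bookkeeping imposed by renaming the bound variable to $y$. The substitution premise demanded by \cref{lemma:abstraction} is phrased with $B\mbox{\unary{P}{y}}$, whereas the induction hypothesis naturally produces $M_1 \bullet \mbox{\upd{\sigma}{x}{P}}$; reconciling the two is precisely what forces the freshness guarantee of \cref{lemma:chi}, the substitution reordering of \cref{lemma:composition}, and the closure of \Redg under \convg from \cref{lemma:redClosedAlpha}. A secondary subtlety is that \cref{lemma:abstraction} asks for \sn{B}, the strong normalizability of the already-substituted body, which is what makes the auxiliary observation that variables are reducible at every type necessary.
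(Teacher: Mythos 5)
Your proposal is correct and follows essentially the same route as the paper: induction on the typing derivation, with the variable and application cases immediate, and the abstraction case discharged by instantiating \cref{lemma:abstraction} after using \cref{lemma:updateRedSubst}, the induction hypothesis, \cref{lemma:composition} and \cref{lemma:redClosedAlpha} to produce its four premises. Your explicit justification that $v\,y$ is reducible via \texttt{CR3} (needed for the \sng\ premise on the substituted body) is a detail the paper leaves implicit, but it is the intended argument.
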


\begin{proof}
By induction on the typing derivation:
\begin{itemize}
    \item Case {\tt $\vdash$var}: If $M$ is a variable, then the thesis follows directly from \cref{def:redsubst}.
    
    \item Case {\tt $\vdash$abs}: If $M=\lambda x M'$ with type $\alpha\Rightarrow\beta$, then we need to show \Red{\beta}{((\mbox{\effect{(\lambda x M')}{\sigma}})N)} for any \Red{\alpha}{N}. 
    First of all, $\mbox{\effect{\lambda x M'}{\sigma}}=\lambda y (\mbox{\effect{M'}{(\mbox{\upd{\sigma}{x}{y}})}})$ for some fresh name $y$.
    Now, to use \cref{lemma:abstraction} we need to derive its hypothesis: (1) \sn{(\mbox{\effect{M'}{\mbox{\upd{\sigma}{x}{y}}}})}; (2) \sn{N}; (3) for every \Red{\alpha}{P}, \Red{\beta}{((\mbox{\effect{M'}{\mbox{\upd{\sigma}{x}{y}}}})\mbox{\unary{P}{y}})}, and; (4) \Red{\alpha}{N}.
    As to (1), by \cref{lemma:updateRedSubst} we have \RedSubst{(\Gamma\Cup x:\alpha)}{(\mbox{\upd{\sigma}{x}{y}})}, thus by induction hypothesis \Red{\beta}{(M'\bullet\mbox{\upd{\sigma}{x}{y}})}, and so by |CR1| of \cref{lemma:redproperties} we obtain the desired result. (2) follows immediately by |CR1|. As to (3), first by \cref{lemma:composition} we have \conv{(\mbox{\effect{M'}{\mbox{\upd{\sigma}{x}{y}}}})\mbox{\unary{P}{y}}}{M'\bullet\mbox{\upd{\sigma}{x}{P}}}. 
    Next, by \cref{lemma:updateRedSubst}, \RedSubst{(\Gamma\Cup x:\alpha)}{(\mbox{\upd{\sigma}{x}{P}})}, so by the induction hypothesis we have \Red{\beta}{(M'(\sigma,P/x))}. And finally, by \cref{lemma:stepBetaCommutesAlpha} together with \cref{lemma:redClosedAlpha} we can derive the desired result. 
    (4) is an assumption already made. At last, having (1) through (4) we can use \cref{lemma:abstraction} and derive \Red{(\alpha\Rightarrow\beta)}{(\mbox{\effect{(\lambda x M')}{\sigma}})}, and so obtain \Red{\beta}{((\mbox{\effect{(\lambda x M')}{\sigma}})N)} by \cref{def:red}, as desired.
    
    \item Case {\tt $\vdash$app}: Immediate by the induction hypothesis.
\end{itemize}
\end{proof}

Without further ado, we have the Strong Normalization Theorem:

\begin{theorem}
\label{theo:sn}
|strongNormalization : ∀ {Γ M α} → Γ ⊢ M ∶ α → sn M|
\end{theorem}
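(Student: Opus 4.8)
The plan is to assemble the three cornerstone results already established: the main lemma (\cref{lemma:subst}), the reducibility of the identity substitution (\cref{lemma:iotaisred}), and the fact that reducible terms are strongly normalizing (|CR1| of \cref{lemma:redproperties}). These fit together almost immediately, since the main lemma was deliberately phrased in the substitution-closed form \Red{\alpha}{(\effect{M}{\sigma})} precisely so that the theorem would follow by instantiating $\sigma$ with the identity $\iota$.

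Concretely, given a typing derivation $\Gamma \vdash M : \alpha$, I would first invoke \cref{lemma:iotaisred} to obtain \RedSubst{\iota}{\Gamma}, i.e.\ that the identity substitution is reducible under any context. Feeding this, together with the typing derivation, into the main lemma (\cref{lemma:subst}) yields \Red{\alpha}{(\effect{M}{\iota})}. Applying |CR1| of \cref{lemma:redproperties} then gives \sn{(\effect{M}{\iota})}.

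The only subtlety — and the single step that keeps this from being a one-liner — is that strong normalization has been obtained for $\effect{M}{\iota}$ rather than for $M$ itself, and these two terms need not be syntactically identical: as emphasized in \cref{sec:substitution}, the action of substitution reorders and renames bound variables even under $\iota$ (for instance $\lambda v_1 v_1$ becomes $\lambda v_0 v_0$). To bridge this gap I would use \cref{lemma:substIota}, which gives \conv{M}{\effect{M}{\iota}}, together with the symmetry of \convg\ (an equivalence relation, as recalled in \cref{subsec:alpha}) to obtain \conv{\effect{M}{\iota}}{M}. Finally, \cref{lemma:snClosedAlpha} — instantiated with the alpha-commutativity of \stepg\ supplied by \cref{lemma:stepBetaCommutesAlpha} — transports \sn{(\effect{M}{\iota})} along this conversion to the desired \sn{M}. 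I expect no genuine obstacle here: all of the real work has been front-loaded into the main lemma, so this closing argument is just the assembly of the pieces plus the bookkeeping alpha-conversion step needed to undo the bound-variable renaming introduced by $\bullet$.
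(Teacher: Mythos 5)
Your proof is correct and follows exactly the paper's own argument: \cref{lemma:iotaisred} and \cref{lemma:subst} give \Red{\alpha}{(\effect{M}{\iota})}, |CR1| gives \sn{(\effect{M}{\iota})}, and \cref{lemma:substIota} plus \cref{lemma:snClosedAlpha} (instantiated via \cref{lemma:stepBetaCommutesAlpha}) transport this to \sn{M}. Your explicit mention of the symmetry of \convg\ to orient \cref{lemma:substIota} correctly is a detail the paper leaves implicit, but otherwise the two proofs coincide.
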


\begin{proof}
By \cref{lemma:iotaisred,lemma:subst} we have \Red{\alpha}{(\mbox{\effect{M}{\iota}})}, and so by |CR1| of \cref{lemma:redproperties}, \sn{(\mbox{\effect{M}{\iota}})}. Then, by \cref{lemma:substIota}, \conv{\mbox{\effect{M}{\iota}}}{M}, and thus by \cref{lemma:stepBetaCommutesAlpha} together with \cref{lemma:snClosedAlpha} it follows \sn{M}.
\end{proof}

\section{System T}
\label{sec:systemt}

Let |C| and |\contraction\!T| be inductively defined:
\begin{code}
data C : Set where 
  O : C; S : C; Rec : C
\end{code}

\begin{code}
data _▹T_ : Rel where
  beta : ∀ {M N} → M ▹β N → M ▹T N
  recO : ∀ {G H} → k Rec · G · H · k O ▹T G
  recS : ∀ {G H N} → k Rec · G · H · (k S · N) ▹T H · N · (k Rec · G · H · N)
\end{code}
The syntax and theories of substitution, alpha- and beta-conversion for System~T are then obtained by instantiating the framework with both |C| and |\contraction\!T|, and similarly to STLC as shown in the previous section.

The assignment of types in System~T is extended from STLC and defined:
\begin{code}
data _⊢_∶_ (Γ : Cxt) : Λ → Type → Set where
  ⊢zro : Γ ⊢ k O ∶ nat
  ⊢suc : Γ ⊢ k S ∶ nat ⇒ nat
  ⊢rec : ∀ {α} → Γ ⊢ k Rec ∶ α ⇒ (nat ⇒ α ⇒ α) ⇒ nat ⇒ α
  ⊢var : ∀ {x} → (k : x ∈ Γ) → Γ ⊢ v x ∶ Γ ⟨ k ⟩
  ⊢abs : ∀ {x M α β} → Γ $\Cup$ x ∶ α ⊢ M ∶ β → Γ ⊢ ƛ x M ∶ α ⇒ β
  ⊢app : ∀ {M N α β} → Γ ⊢ M ∶ α ⇒ β → Γ ⊢ N ∶ α → Γ ⊢ M · N ∶ β
\end{code}
\noindent |nat| is syntax-sugar for $\tau$.

\subsection{The Strong Normalization Theorem in System~T}

The proof of the Strong Normalization Theorem for System~T follows the same structure as the one for STLC: first, we have to prove that |▹T| satisfies condition in \cref{def:conditionsContraction} so to derive the first step in Girard's method, i.e., |CR1|. Then, we need to have a main lemma and reason by induction on the syntax (the typing judgment) to derive reducibiliy. 
Finally, the Strong Normalization Theorem for System~T follows {\em exactly} as \cref{theo:sn}.

So, to start with, we have that |▹T| satisfies \cref{def:conditionsContraction} similar to STLC:
\begin{lemma}
|cond▹T : ∀ {x N} → Vec (v x) N → ∀ {Q} → ¬(N ▹T Q)|
\end{lemma}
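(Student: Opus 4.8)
The plan is to imitate the proof of the STLC lemma |cond▹β| and argue by a case analysis on the derivation of the contraction under |▹T|, refuting each of its three introduction rules with the hypothesis \texttt{Vec (v x) N}. The guiding observation is that in every |▹T|-redex the \emph{head} (in the sense of the \texttt{Vec} definition) is either an abstraction or the constant |k Rec|, whereas \texttt{Vec (v x) N} forces the head of $N$ to be the variable |v x|.

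The \texttt{beta} case is immediate: there the derivation repackages a beta-contraction, so we have \contracts{N}{Q} under |▹β|, and applying the already-proved lemma |cond▹β| to the vector hypothesis closes the case with no extra work. The interesting cases are \texttt{recO} and \texttt{recS}. The first forces $N$ to be |k Rec · G · H · k O| and the second |k Rec · G · H · (k S · N')|; in both, the head of $N$ is |k Rec|. Since \texttt{Vec (v x) N} says instead that $N$ is |v x| applied to a possibly empty spine of arguments, peeling those applications off by repeatedly inverting the \texttt{cons} constructor of \texttt{Vec} down to the base case \texttt{nil} forces |v x| and |k Rec| to be the same term, which is absurd because |v| and |k| are distinct constructors of |Λ|. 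I would either carry out this inversion inline --- the two recursion redexes have fixed, small depth, so it is a finite unfolding --- or factor out a small auxiliary lemma asserting that a vector whose head is a variable cannot have a constant as its leftmost atom, and invoke it in both cases.

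The only real obstacle is mechanical rather than conceptual: in the \texttt{recO} and \texttt{recS} cases the redex is three or four applications deep, so the inversion of \texttt{Vec} must descend through several \texttt{cons} layers before reaching the \texttt{nil} that exposes the clash between |v x| and |k Rec|. Aligning the nested-application shape of each recursion redex with the constructors of \texttt{Vec} so that Agda's pattern matcher produces this clash is the sole delicate point; the underlying fact, that no |▹T|-redex has a variable at its head, is otherwise immediate.
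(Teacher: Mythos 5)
Your proposal is correct and matches the paper's approach: the paper gives no explicit proof for \texttt{cond▹T}, stating only that it holds ``similar to STLC,'' where the corresponding \texttt{cond▹β} was proved ``immediate by contradiction'' from the \texttt{Vec} hypothesis and the contraction derivation --- exactly the inversion argument you describe. Your observation that every |▹T|-redex has either an abstraction or |k Rec| at its head, clashing with the variable head forced by \texttt{Vec (v x) N}, is precisely the content of that contradiction, and delegating the \texttt{beta} case to the |▹β| analysis is the natural decomposition.
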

\noindent
Thus, we inherit \cref{lemma:redproperties} in System~T, particularly |CR1|.

As to the second part, i.e., the main lemma, we have to consider only the additional syntax;
the remaining cases follow identically.
\zero\ and \sucg\ are reducible by |CR3| (similar to $v_0$ in the proof of {\tt CR1}). As to \recg, we shall follow the same strategy as in STLC and have a separate lemma, namely the \anonymdef{recursion lemma}. In the next section we cover this last case, while at the same time we present the announced simplification.

\subsection{Recursion}
\label{subse:rec}

In this section, first we explain the induction used in the proof of the recursion lemma as presented in \cite{girard1989} but using the terminology of our framework, then we present the simplification, and finally we formalize the proof.

We must prove that the neutral term \rec{G}{H}{N} is reducible for any reducible terms $G$, $H$ and $N$. 
First, we shall strengthen our induction hypothesis: by |CR1| we know that $G$, $H$ and $N$ are \sng, so we can assume that these derivations are given as additional hypotheses. 
Also, we need some preparatory definitions: let $\nu(M)$, $\ell(M)$ and $\mathsf{nf}(M)$ be respectively the length of the longest reduction starting in $M$, the count of \sucg\ symbols in $M$, and the normal form of the (strongly normalizing) term $M$.
Now, to prove our thesis we shall proceed by induction on the {\em strict component-wise} order (henceforth, just component-wise order) on the 4-tuple\footnote{The component-wise order on a $n$-tuple is given by: $a_i <_i b \Rightarrow (a_0\ldots,a_i\ldots,a_n) <_i (a_0\ldots,b\ldots,a_n)$ for any $i$, $n$ and $b$.} 
$(\sn{G}, \sn{H}, \nu(N), \ell(\mathsf{nf}(N)))$,
where in \sn{G} and \sn{H} we shall use the structural order of \sng,
in $\nu(N)$
the complete order on natural numbers\footnote{The complete order on natural numbers is the same as transitive closure of the structural order on them.}, 
and in $\ell(\mathsf{nf}(N))$ the structural order on natural numbers.
As we did in \cref{lemma:abstraction}, we are going to use |CR3| of \cref{lemma:redproperties} for the matter, and so
we have to prove that every reductum of \rec{G}{H}{N} is reducible.
There are five cases: 
\begin{enumerate*}[label=(\arabic*)]
    \item \rec{G'}{H}{N} with \step{G}{G'}, 
    \item \rec{G}{H'}{N} with \step{H}{H'}, 
    \item \rec{G}{H}{N'} with \mbox{\step{N}{N'}}, 
    \item $G$ with $N=\zero$, and 
    \item $HN'(\rec{G}{H}{N'})$ with $N=\suc{N'}$. 
\end{enumerate*}
As to (1) and (2), we can directly use the induction hypothesis on $\sn{G'}$ and  $\sn{H'}$.
As to (3), we can suspect that $\nu(N')<\nu(N)$, and so we can proceed likewise. (4) is a hypothesis. As to (5), it is immediate that $\ell(\mathsf{nf}(N'))<\ell(\mathsf{nf}(\suc{N'}))$.

We can simplify the induction schema used above by dispensing with $\mathsf{nf}$, and instead proceed by induction on the component-wise order of the 3-tuple 
$(\sn{G}, \sn{H}, (\nu(N), \ell(N)))$,
where in \sn{G} and \sn{H} we use the same order as above, but in $(\nu(N), \ell(N))$ we use the {\em lexicographic order} on tuples\footnote{The lexicographic order on a tuple is given by: $a<b \Rightarrow (a,c)<(b,d)$ and $b<c \Rightarrow (a,b)<(a,c)$ for any $a$, $b$, $c$, $d$.}.
As to cases (1), (2) and (4), the induction is the same. 
As to (3), we have already assumed that \mbox{$\nu(N')<\nu(N)$}, so we can use the (lexicographic-based) induction hypothesis on $(\nu(N'),\ell(N'))$, and disregard if $\ell(N')$ goes off.
Finally, as to (5), on the one hand, it is immediate that $\ell(N')<\ell(\suc{N'})$. On the other hand, we can also guess that $\nu(N')=\nu(\suc{N'})$, therefore we can proceed by induction on $(\nu(\suc{N'}),\ell(N'))$. 

Now, to formalize the recursion lemma based on the last induction schema, first we need to give some definitions, as usual. Next is the function that computes the list of reductio for any given term $M$, while at the same time proves it is {\em sound}, i.e., every element of the list is actually a reductum of $M$.
We present it in two separate parts, first |redAux|, which as the name suggest, is an auxiliary function, and then |reductio| which is the complete and desired operation (we omit some code):

\begin{code}[numbers=left]
redAux : (M : Term) → List (Σ[ N ∈ Term ](M →β N))
redAux (ƛ x M · N)                 = [ (M [ N / x ] , ...) ]
redAux (k Rec · G · H · k O)       = [ (G , ...) ]
redAux (k Rec · G · H · (k S · N)) = [ (H · N · (k Rec · G · H · N) , ...) ]
redAux _                           = []

reductio : (M : Term) → List (Σ[ N ∈ Term ](M →β N))
reductio (k _)   = []
reductio (v _)   = []
reductio (ƛ x M) = mapL (mapΣ (ƛ x) abs) (reductio M)
reductio (M · N) = redAux (M · N) ++ ... (reductio M) ++ ... (reductio N)
\end{code}

\noindent |mapΣ| is the function that given two other functions and a tuple, it applies each function to one of the components of the tuple. The purpose of the auxiliary function is to put together the cases of redexes, and apart from the |reductio| definition, 
so to have a cleaner treatment in the case of applications in the latter (see line 11).

The algorithm is also {\em complete}, i.e., it outputs all reductio of $M$, and its proof follows by induction on the derivation of any given reduction:
\begin{lemma}
\label{lemma:reduction}
|lemmaReductio : ∀ {M N} (r : M →β N) → (N , r) ∈′ (reductio M)|
\end{lemma}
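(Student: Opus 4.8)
The plan is to proceed by structural induction on the reduction derivation |r|, following the four constructors of the syntactic closure |⟿| from \cref{def:reduction}: |abs|, |appL|, |appR| and |redex|. In each case the goal is to exhibit |(N , r)| as a member of |reductio M|, and the whole difficulty is bookkeeping: since |∈′| is propositional list membership, we must produce, inside the computed list, an entry whose \emph{first} component is |N| and whose \emph{second} component is literally the derivation |r| under analysis.

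The base cases are the |redex| ones. If |r = redex c|, I case-split on the contraction |c|---and, in the |beta| case, on the underlying |▹β| derivation---which forces |M| into exactly one of the three redex shapes recognised by |redAux|, namely |ƛ x M' · N'|, |k Rec · G · H · k O|, or |k Rec · G · H · (k S · N')|. For each of these, |redAux M| computes to the singleton list whose only entry pairs the corresponding contractum with precisely the derivation |redex (beta beta)|, |redex recO|, or |redex recS|; this entry coincides with |(N , r)| by |refl|, and since |redAux M| is the left operand of the concatenation in the defining clause of |reductio| for applications (line 11), membership follows by the left-injection of |++|.

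The recursive cases combine the induction hypothesis with two standard list facts: that |mapL| preserves membership---the image under |h| of a member of |xs| is a member of |mapL h xs|---and that each operand of a concatenation injects into it. For |r = abs r'| with |M = ƛ x M'| and |N = ƛ x N'|, the induction hypothesis yields |(N' , r') ∈′ reductio M'|; applying |mapΣ (ƛ x) abs| sends this entry to |(ƛ x N' , abs r')|, which by the map-membership fact lies in |reductio (ƛ x M')| (line 10). The cases |r = appL r'| and |r = appR r'| are symmetric: the induction hypothesis on the reducing subterm, followed by the map-membership fact through |mapΣ (_· P) appL| (respectively |mapΣ (P ·_) appR|), lands the entry in the appropriate mapped sublist, which the concatenation-injections then place into |reductio| of the whole application, past the |redAux| summand. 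Note that I need not evaluate that |redAux| summand at all, since the injection holds for an arbitrary operand.

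I expect the main obstacle to be the matching of the proof component rather than the contractum. Agda gives us no proof irrelevance for |⟿|, so it is not enough that |N| occur as some first projection: the stored derivation must be the \emph{same} term as |r|. This holds because |redAux| and |reductio| build their entries using exactly the constructors |redex|, |abs|, |appL| and |appR| (threaded, where needed, through |mapΣ|), so once the pattern match on |r| has exposed its head constructor, each stored derivation unifies with |r| by |refl|. The delicate point is merely to confirm, in the |redex| clause, that the anonymous proofs written |...| in lines 2--4 of |redAux| are indeed |redex (beta beta)|, |redex recO| and |redex recS|, and that the listed contractum is definitionally |N|.
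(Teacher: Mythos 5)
Your proposal is correct and follows the same route as the paper, which proves the lemma by induction on the derivation of the given reduction; your case analysis on the four closure constructors, the use of map- and concatenation-membership facts for lists, and the observation that the stored derivations are built from exactly the same constructors (so the proof component matches definitionally) is precisely the bookkeeping the mechanization carries out. The paper states this only as a one-line remark, so your elaboration adds nothing that conflicts with it.
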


\noindent |∈′| is the standard relation of membership in lists.

We can use the list returned by |reductio| to develop an algorithm that computes our first ordinal $\nu$, i.e., the length of the longest reduction beginning in some strongly normalizing term $M$ given, by recursively computing such a result for every reductum of $M$, then selecting the longest one, and finally adding one for the first step.
Notice that the length of longest path and the height of the derivation tree of \sn{M} are synonyms, so we shall use them interchangeably:
\begin{code}
ν : ∀ {M} → sn M → ℕ 
ν {M} (acc i) = 1 + max (mapL (λ{(N , M→N) → ν (i N M→N)}) (reductio M))
\end{code}
|max| is the function that returns the maximum element in a given list.
The above definition is standard for computing the height of any inductive type, except for that \sng\ has an {\em infinitary premise} \cite[p.~13]{abelTwelf}. This means that we need to enumerate all possible applications to obtain every possible sub-tree. Since every term $M$ has a finite number of redexes, so there can only be finitely many applications of the premise, i.e., reductions \step{M}{N} for some $N$, all of them being enumerated by the {\tt reductio} algorithm, as proven in \cref{lemma:reduction}.

The height of \sn{N} equals to the height of \sn{(\suc{N})}, as guessed at the start of this section. This is immediate since the prefix \sucg\ does not add any redex to any reduction path:

\begin{lemma}
\label{lemma:heightSuc}
|lemmaSν : ∀ {M} (p : sn M) (q : sn (k S · M)) → ν p ≡ ν q|
\end{lemma}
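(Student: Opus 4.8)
The plan is to prove the equality by induction on the accessibility witness $p : \sn{M}$, unfolding both sides with the defining equation of $\nu$. Writing $p = \mathtt{acc}\,i$ and $q = \mathtt{acc}\,j$, we have $\nu\,p = 1 + \mathtt{max}\,(\mathtt{mapL}\,f_p\,(\mathtt{reductio}\,M))$ and $\nu\,q = 1 + \mathtt{max}\,(\mathtt{mapL}\,f_q\,(\mathtt{reductio}\,(k\,\sucg \cdot M)))$, where $f_p\,(N,r) = \nu\,(i\,N\,r)$ and $f_q\,(N,r) = \nu\,(j\,N\,r)$. Since both sides carry the same leading $1$, it suffices to show that the two argument lists fed to $\mathtt{max}$ have the same maximum.

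The crux is a purely computational observation: prefixing $M$ with the successor constant neither creates nor destroys any redex. The term $k\,\sucg \cdot M$ is neither a $\beta$-redex nor a recursion redex, since its head is the constant $\sucg$ rather than an abstraction or $k\,\recg$; hence $\mathtt{redAux}\,(k\,\sucg \cdot M) = [\,]$. Moreover $k\,\sucg$ is a constant, so $\mathtt{reductio}\,(k\,\sucg) = [\,]$. Unfolding $\mathtt{reductio}$ on the application $k\,\sucg \cdot M$ therefore leaves only the contributions obtained by reducing the right argument with the $\mathtt{appR}$ rule of \cref{def:reduction}, giving $\mathtt{reductio}\,(k\,\sucg \cdot M) = \mathtt{mapL}\,(\lambda\,(N,r).\,(k\,\sucg \cdot N,\ \mathtt{appR}\,r))\,(\mathtt{reductio}\,M)$. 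In words, the reductio lists of $M$ and of $k\,\sucg \cdot M$ are in an index-preserving bijection: every reductum $N$ of $M$ witnessed by $r$ corresponds to the reductum $k\,\sucg \cdot N$ of $k\,\sucg \cdot M$ witnessed by $\mathtt{appR}\,r$, and no substitution --- hence no renaming by $\bullet$ --- takes place at the top application node.

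Combining these two facts through the fusion law $\mathtt{mapL}\,g\,(\mathtt{mapL}\,h\,\ell) = \mathtt{mapL}\,(g \circ h)\,\ell$, the goal reduces to checking that $f_p$ and the composite $f_q \circ (\lambda\,(N,r).\,(k\,\sucg \cdot N, \mathtt{appR}\,r))$ agree on every element of $\mathtt{reductio}\,M$; that is, for each reductum $N$ of $M$ with witness $r$, one must show $\nu\,(i\,N\,r) = \nu\,(j\,(k\,\sucg \cdot N)\,(\mathtt{appR}\,r))$. This is precisely the induction hypothesis instantiated at $N$: the sub-witness $i\,N\,r : \sn{N}$ is a proper structural component of $p$, while $j\,(k\,\sucg \cdot N)\,(\mathtt{appR}\,r) : \sn{(k\,\sucg \cdot N)}$ supplies the companion successor-witness. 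The two lists handed to $\mathtt{max}$ are then pointwise equal, so their maxima coincide and adding the common $1$ yields $\nu\,p = \nu\,q$.

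I expect the main obstacle to be bookkeeping rather than mathematics. One must establish the equation $\mathtt{reductio}\,(k\,\sucg \cdot M) = \mathtt{mapL}\,(\ldots)\,(\mathtt{reductio}\,M)$ exactly --- in particular that the proof-lifting code elided from the definition of $\mathtt{reductio}$ really is the $\mathtt{appR}$-map --- together with a small congruence lemma stating that $\mathtt{max}$ returns equal results on pointwise-equal image lists. Some care is also required for Agda's termination checker: because $\nu$ is defined by matching on the accessibility proof, the recursion must descend on the structurally smaller witness $i\,N\,r$, with the second argument rebuilt from $j$ rather than recursed on, and the completeness of $\mathtt{reductio}$ (\cref{lemma:reduction}) is what guarantees that the enumerated list really covers every reduction step counted by $\nu$.
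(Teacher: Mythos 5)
Your proof is correct and follows the same route as the paper's: induction on the accessibility derivation of $p$ (the paper's proof reads simply ``by induction on either the derivation of $p$ or $q$''), with the key observation that $\texttt{reductio}\,(k\,\sucg\cdot M)$ is exactly the $\texttt{appR}$-lift of $\texttt{reductio}\,M$ because the successor prefix introduces no redex. The details you supply --- the emptiness of $\texttt{redAux}\,(k\,\sucg\cdot M)$ and of $\texttt{reductio}\,(k\,\sucg)$, the map-fusion step, and the pointwise use of the induction hypothesis on the structurally smaller witness $i\,N\,r$ --- are precisely what the paper's one-line proof leaves implicit.
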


\begin{proof}
By induction on either the derivation of $p$ or $q$.
\end{proof}

Next we have that the height of \sn{M} decreases after a computation step is consumed, or in other words, every (immediate) sub-tree of \sn{M} is strictly smaller. 
The name of the lemma is |lemmaStepν|, and 
its proof follows by properties of lists, and by using \cref{lemma:reduction}:

\begin{lemma}
\label{lemma:heightStep}
|∀ {M N i} (p : sn M) → p ≡ acc i → (r : M →β N) → ν (i N r) < ν p|
\end{lemma}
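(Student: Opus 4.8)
The plan is to unfold the definition of $\nu$ on the accessibility proof $p$ and reduce the strict inequality to a membership fact about the enumerated reductio. First I would rewrite the goal along the hypothesis $p \equiv \texttt{acc}\,i$; after this substitution the right-hand side $\nu\,p$ computes to $1 + \texttt{max}\,L$, where $L$ is the list $\texttt{mapL}\,g\,(\texttt{reductio}\ M)$ with $g = \lambda(N',r').\,\nu\,(i\,N'\,r')$, exactly as in the defining equation of $\nu\,(\texttt{acc}\,i)$. Since $g\,(N,r)$ is definitionally $\nu\,(i\,N\,r)$, it suffices to show that $\nu\,(i\,N\,r)$ occurs in $L$: any element of a list of naturals is bounded by its maximum, and hence strictly below $1 + \texttt{max}\,L$.

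The occurrence in $L$ is obtained in two routine steps. By \cref{lemma:reduction} the pair $(N,r)$ is a member of $\texttt{reductio}\ M$, since that lemma states precisely that every reduction is enumerated by the algorithm. Then, by the standard fact that \texttt{mapL} preserves list membership, $g\,(N,r) = \nu\,(i\,N\,r)$ is a member of $L$. Combining this with the lemma that membership implies being $\le \texttt{max}$, I get $\nu\,(i\,N\,r) \le \texttt{max}\,L$, whence $\nu\,(i\,N\,r) < 1 + \texttt{max}\,L = \nu\,p$, which is the thesis.

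I expect the only mildly delicate point to be the bookkeeping around the dependent pairs: the elements of $\texttt{reductio}\ M$ carry proofs of \step{M}{N'}, so one must check that the particular reduction $r$ given in the hypothesis is the very derivation recorded in the list, so that $g$ evaluates to $\nu\,(i\,N\,r)$ and not to $\nu\,(i\,N\,r')$ for some other witness of \step{M}{N}. This is exactly what the completeness direction of \cref{lemma:reduction} supplies, so the difficulty is notational rather than conceptual; the arithmetic step and the two list facts (\texttt{mapL} preserves membership, and a member is $\le$ the maximum) are entirely standard properties of lists.
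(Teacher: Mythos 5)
Your proposal is correct and matches the paper's own argument: the paper states that the proof ``follows by properties of lists, and by using \cref{lemma:reduction}'', which is precisely your chain of rewriting along $p \equiv \texttt{acc}\,i$, invoking the completeness of \texttt{reductio}, membership preservation under \texttt{mapL}, and the bound of a member by the maximum. The point you flag about the dependent pair is indeed handled by \cref{lemma:reduction} asserting membership of the exact pair $(N,r)$, so nothing further is needed.
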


\noindent Notice the {\em apparently} clumsy way it was stated. |i N r| is a proof of \sn{N}. To require such a proof as an argument would be inefficient since we already know \sn{M} and \step{M}{N}. 
Instead, by asking for the argument |p ≡ acc i| we can obtain the premise |i| of \sn{M} (this can be easily supplied afterwards with the constructor of |≡|, |refl|), and apply it to |N| and |r|, and so obtain a proof of \sn{N}.

Next is our second ordinal:

\begin{definition}
\label{def:symbols}
|ℓ : Term → ℕ| {\em is the function that counts the number of occurrences of the \sucg\ symbol in any given term, and it is defined by recursion on the term.}
\end{definition}

Finally, we have the recursion lemma.
Let |<-lex| be the lexicographic order on tuples of $\mathbb{N}$. Then |Acc _<-lex_| is the type of pairs that are accessible by such an order;
it is easy to prove that for any proof $p$ of \sn{N} for some $N$, it follows that $(\nu(p) , \ell(N))$ is in the accessible part of the lexicographic order, hence such an argument can always be derived.
Also, note that \rec{G}{H}{N} is neutral for any $G$, $H$ and $N$.
Then:

\begin{lemma}[Recursion]
\hfill 
\begin{code}
lemmaRec : ∀ {α G H N} → sn G → sn H → (p : sn N) → Acc _<-lex_ (ν p , ℓ N)
→ Red α G → Red (nat ⇒ α ⇒ α) H → Red α (k Rec · G · H · N)
\end{code}
\end{lemma}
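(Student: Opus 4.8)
The plan is to apply \texttt{CR3} of \cref{lemma:redproperties} to the term \rec{G}{H}{N}, which is neutral: applying it to any further argument $P$ produces a \texttt{Rec}-headed application carrying at least four arguments, and this matches neither a $\beta$-redex (its head is the constant \texttt{Rec}, not an abstraction) nor the \texttt{recO} or \texttt{recS} patterns (both of which have exactly three arguments), so it can never be a contractum. Hence \texttt{CR3} reduces the goal $\Red{\alpha}{(\rec{G}{H}{N})}$ to proving $\Red{\alpha}{M'}$ for every reductum $M'$ with $\stepc{\rec{G}{H}{N}}{M'}$. I would run the argument as a single recursion whose termination measure is the component-wise order on the triple $(\sn{G}, \sn{H}, (\nu(N), \ell(N)))$ --- structural on the first two components and lexicographic on the pair $(\nu(N), \ell(N))$ in the third --- carrying the lexicographic component through the explicit accessibility certificate on $(\nu(N), \ell(N))$, so that the recursive calls consume that evidence.

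By inspecting the reduction rules, the reductums of \rec{G}{H}{N} are exactly the five cases listed earlier, and three of them are routine. The \texttt{recO} redex, where $N=\zero$ and $M'=G$, is discharged by the hypothesis $\Red{\alpha}{G}$. The congruence on the first argument, $\stepc{G}{G'}$, decreases the $\sn{G}$ component: $\sn{G'}$ is an immediate subderivation of $\sn{G}$ and, by \texttt{CR2} of \cref{lemma:redproperties}, $\Red{\alpha}{G'}$ holds, so the recursive call on \rec{G'}{H}{N} --- with $\sn{H}$, $p$ and the accessibility certificate unchanged --- yields the result. The congruence $\stepc{H}{H'}$ is symmetric, decreasing the $\sn{H}$ component and re-establishing $\Red{(\mathsf{nat} \Rightarrow \alpha \Rightarrow \alpha)}{H'}$ again by \texttt{CR2}.

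The two remaining cases act on the third argument and are where the ordinals are needed. If $\stepc{N}{N'}$, then by \cref{lemma:heightStep} the step strictly decreases the first ordinal, $\nu(p')<\nu(p)$ for the subderivation $p'$ of $\sn{N'}$; hence $(\nu(N'),\ell(N')) <_{\mathrm{lex}} (\nu(N),\ell(N))$ regardless of how $\ell$ behaves, and the recursive call (same $\sn{G}$ and $\sn{H}$, with $p'$ and the accessibility certificate for the smaller pair) delivers $\Red{\alpha}{(\rec{G}{H}{N'})}$. The \texttt{recS} redex, where $N=\suc{N'}$ and $M'=H\cdot N'\cdot(\rec{G}{H}{N'})$, is the heart of the lemma: here $\nu$ is not decreased but \emph{preserved}, since by \cref{lemma:heightSuc} $\nu(p')=\nu(p)$, while $\ell(N')<\ell(\suc{N'})$, so once more $(\nu(N'),\ell(N')) <_{\mathrm{lex}} (\nu(N),\ell(N))$ and the recursion yields $\Red{\alpha}{(\rec{G}{H}{N'})}$. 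To assemble the reductum I would use the hypothesis $\Red{(\mathsf{nat} \Rightarrow \alpha \Rightarrow \alpha)}{H}$: applied to $N'$ --- whose $\Red{\mathsf{nat}}{N'}$ is, by \cref{def:red}, just $\sn{N'}$, obtained from $\sn{(\suc{N'})}$ through \cref{lemma:inversionSnApp} --- it gives $\Red{(\alpha \Rightarrow \alpha)}{(H\cdot N')}$, and applied in turn to the recursively reducible \rec{G}{H}{N'} it gives $\Red{\alpha}{(H\cdot N'\cdot(\rec{G}{H}{N'}))}$ by \cref{def:red}.

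The main obstacle I anticipate is not any single mathematical step but the bookkeeping of the recursion: each recursive call must be fed a measure that is strictly smaller in the intended component together with the matching accessibility certificate for the $(\nu,\ell)$ pair. Concretely, this means deriving the two lexicographic inequalities --- from \cref{lemma:heightStep} in the $\stepc{N}{N'}$ case, and from \cref{lemma:heightSuc} together with $\ell(N')<\ell(\suc{N'})$ in the \texttt{recS} case --- and repackaging them as sub-certificates of the given accessibility proof. Because the descent is structural in $\sn{G}$ and $\sn{H}$ but well-founded (through the explicit accessibility argument) in $(\nu,\ell)$, and because $p$ appears inside the type of that argument, arranging the hypotheses so that Agda's termination checker accepts every recursive call --- especially the \texttt{recS} case, where neither $p$ nor $\sn{N}$ shrinks structurally and only the accessibility argument does --- is the genuinely delicate part of the formalisation.
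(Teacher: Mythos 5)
Your proposal is correct and follows essentially the same route as the paper: the same appeal to \texttt{CR3} on the neutral term \rec{G}{H}{N}, the same induction measure $(\sn{G},\sn{H},(\nu,\ell))$ with the lexicographic pair carried as an explicit accessibility certificate, and the same case analysis, using \cref{lemma:heightStep} for the $\stepc{N}{N'}$ case and \cref{lemma:heightSuc} together with $\ell(N')<\ell(\suc{N'})$ for \texttt{recS}. Your additional details --- re-establishing reducibility in the congruence cases via \texttt{CR2}, and extracting $\sn{N'}$ from $\sn{(\suc{N'})}$ via \cref{lemma:inversionSnApp} to feed $H$ --- are steps the paper leaves implicit but which are indeed required in the formalisation.
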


\begin{proof}
By induction on the derivations of \sn{G} and \sn{H}, and on the lexicographic order of the tuple $(\nu(p),\ell(N))$\footnote{In Agda, every function is structural recursive, and each one of them will successfully pass the type-checking phase if, put it simply, there exists a subset of the arguments such that for every recursive call in any of its definiens, at least one of the arguments is structurally smaller whilst the others remains the same. This is the same as saying that the induction is based on the component-wise order of any arrangement of such a subset, i.e., on a tuple made up of such arguments.}. 
As has already been said several times by now, we shall resort to |CR3| of \cref{lemma:redproperties} for the matter. So let us fast-forward til the reductum analysis: 
\begin{itemize}
    \item Case {\tt recO}: If \step{\rec{G}{H}{\zero}}{G} then the result is a hypothesis.
    
    \item Case {\tt recS}: If \step{\rec{G}{H}{(\suc{N})}}{HN(\rec{G}{H}{N})} then, since we know both that \mbox{$\nu(N)=\nu(\suc{N})$} by \cref{lemma:heightSuc}, and that $\ell(N) < \ell(\suc{N})$ by definition of $\ell$, we can apply the induction hypothesis and so obtain \Red{\alpha}{(\rec{G}{H}{N})}. Finally, by \cref{def:red} on \Red{({\tt nat}\Rightarrow \alpha \Rightarrow \alpha)}{H} we obtain \Red{\alpha}{(HN(\rec{G}{H}{N}))}.
    
    \item Case {\tt appR}: If \step{\rec{G}{H}{N}}{\rec{G}{H}{N'}} follows from \step{N}{N'} then, by \cref{lemma:heightStep} we know that \mbox{$\nu(N')<\nu(N)$}, and so we can use the induction hypothesis to derive the desired result. 
    
    \item Case {\tt appL}: If the reduction follows from one either in $G$ or $H$, then we can proceed directly by the induction hypothesis.
\end{itemize}
\end{proof}

\section{Related work}

In this development we have encoded the lambda terms using first-order abstract syntax (FOAS).
In contrast, other approaches use \textit{higher-order abstract syntax} (HOAS) \cite{hoas}, i.e., binders and variables are encoded
using the same ones in the host language. 
These systems have the advantage that substitution is already defined.
The first such mechanization of the theorem for STLC was presented in \cite{donnelly2007}, and by using the ATS/LF logical framework \cite{ats}.
However, the theory of (terminating) recursive functions using FOAS is more established, and there are plenty of programming languages that support them.  
This makes fairly easy to translate this mechanization to other 
system supporting standard principles of induction.

A second difference with existing work is that in this paper we have used named variables instead of \textit{de-Bruijn indices} \cite{debruijn1972}, e.g., in our framework the identity function can be written $\lambda x x$ for any $x$, while in the latter $\lambda 0$. Clearly, the former is visually more appealing, making it better suited for textbooks, needless to say it is the actual way programs are written. The main disadvantage is that we do not identify alpha-convertible terms, e.g., $\lambda v_0 v_0$ and $\lambda v_1 v_1$ are different objects, whereas by using indices there is only one possible representative for each class of alpha-convertible terms, and so it is not necessary to deal with alpha-conversion at all. 
To mention some renowned mechanizations of the theorem for STLC using this encoding: in \cite{alti:phd93} the author uses the LEGO system \cite{lego}, and;
in \cite{poplmark2} two different mechanizations are presented, one in Agda and one in Coq \cite{coq}.

As to System T, to the best of our knowledge there is not yet a mechanization of the Strong Normalization Theorem. 


\section{Conclusions}
\label{sec:conclusions}

We have presented a framework for the formal meta-theory of lambda calculi in FOAS with constants, that does not identify alpha-convertible terms, and it is parameterised by a reduction schema. 
On top of it, we have built a complete mechanization of Girard's proof of the Strong Normalization Theorem for System~T.
In addition, we were able to include a simplification on the principle of induction of the original proof. 
Finally, we gave a new and different mechanization of the same method but for STLC.

In terms of size, the framework is ${\sim}1800$LOC long, counting import statements and the like, and of which ${\sim}90$LOC belong to the first part of Girard's proof, namely the reducibility properties.
As to the mechanizations of the proofs for STLC and System~T, they are about $70$ and $260$LOC long repectively.

The proof for STLC presented here is significantly shorter than that of previous works using the same framework. 
In \cite{urciuoli2020}, a proof of the Strong Normalization Theorem for STLC using Joachimski and Matthes' method was presented, and soon after, it was refactored to take alpha-conversion out of the syntactic characterization of the strongly normalizing terms.
The final proof was ${\sim}400$LOC long.
The mechanization presented here adds up to ${\sim}160$LOC,
i.e., less than half the size.
One of the main differences is that the closure of the accessibility definition of the strongly normalizing terms under alpha-conversion required just 3LOC, while its syntactical counterpart required about $100$LOC.

Overall, during this work alpha-conversion was not much of a burden outside the framework. Once the machinery has been set up, just a handful of lemmas were used at specific locations. Beta-reduction was proven to be both alpha-commutative and alpha-compatible with substitution in \cref{lemma:stepBetacompatSubst}, and after that, both results were used in \cref{lemma:abstraction,lemma:subst,theo:sn}, along with \cref{lemma:substIota,lemma:composition,lemma:snClosedAlpha,lemma:redClosedAlpha}, all of them having been previously defined in the framework. Alpha-conversion was not used at all in the recursion lemma.

We hope that this paper can also serve as a tool to extend the proof method to related calculi and different host languages. The method we have presented uses simple techniques and it is rich in details, 
so hopefully it can be adjusted to different scenarios.

\nocite{*}
\bibliographystyle{eptcs}
\bibliography{references}

\end{document}